\documentclass[journal,twocolumn]{IEEEtran}
\usepackage{hyperref}
\usepackage{color,graphicx,amsmath,amssymb,amsthm,mathrsfs,cite}
\usepackage{tikz}
\usepackage{lipsum,adjustbox}
\usepackage{flushend}
\usepackage{cuted}
\usepackage{enumerate}
\usepackage{cool}
\usepackage{caption,subcaption}
\usetikzlibrary{arrows,snakes}
\usetikzlibrary{intersections}
\setcounter{MaxMatrixCols}{20}
\usetikzlibrary{plotmarks}
\usepackage{pgfplots}
\begin{document}
\tikzstyle{cnode}=[circle,draw]
\tikzstyle{cgnode}=[circle,draw]
\tikzstyle{cnode}=[circle,draw]
\tikzstyle{crnode}=[circle,draw]
\tikzstyle{conode}=[rectangle,draw]
\tikzstyle{cpnode}=[circle,draw]
\tikzstyle{rnode}=[rectangle,draw,outer sep=0pt]
\tikzstyle{prnode}=[rectangle,rounded corners,fill=blue!50,text width=4.5em,text centered,outer sep=0pt]
\tikzstyle{prnodebig}=[rectangle,rounded corners,fill=blue!50,text width=7em,text centered,outer sep=0pt]
\tikzstyle{prnodesimple}=[rectangle,draw,text width=4.5em,text centered,outer sep=0pt]
\tikzstyle{bigsnake}=[fill=green!50,snake=snake,segment amplitude=4mm, segment length=4mm, line after snake=1mm]
\tikzstyle{smallsnake}=[snake=snake,segment amplitude=0.7mm, segment length=4mm, line after snake=1mm]

\newtheorem{theorem}{Theorem}
\newtheorem{lemma}{Lemma}
\newtheorem{example}{Example}

\newcommand{\C}{\mathcal{C}}
\newcommand{\ch}{\text{ch}}

\title{Construction of Near-Capacity Protograph LDPC
  Code Sequences with Block-Error Thresholds}

\author{Asit~Kumar~Pradhan,~Andrew~Thangaraj~and~Arunkumar~Subramanian
\thanks{A. K. Pradhan and A. Thangaraj are with the Dept. of Electrical Engg.,
   IIT Madras, Chennai, India, Email: asit.pradhan,andrew@ee.iitm.ac.in.}
\thanks{A. Subramanian is with SanDisk Corporation, Milpitas, CA, USA.}}

\maketitle

\begin{abstract}
Density evolution for protograph Low-Density Parity-Check (LDPC) codes is
considered, and it is shown that the message-error rate falls
double-exponentially with iterations whenever the
degree-2 subgraph of the protograph is cycle-free and noise level is below threshold. Conditions for
stability of protograph density evolution are established and related
to the structure of the protograph. Using large-girth graphs, sequences
of protograph LDPC codes with block-error threshold equal to bit-error
threshold and block-error rate falling near-exponentially with
blocklength are constructed deterministically. Small-sized protographs are optimized to
obtain thresholds near capacity for binary erasure and binary-input Gaussian channels.   
\end{abstract} 

\section{Introduction}
Low-density parity-check (LDPC) codes, which are linear codes with
sparse parity-check matrices, are used today in several
digital communication system standards. Introduced by Gallager
\cite{gallagerthesis} in the 60s, the sparse parity-check matrices of modern LDPC codes are 
specified using the bit and check-node degree distributions of their Tanner graphs
\cite{mct}. The set of all Tanner graphs with a
given degree distribution defines an ensemble of LDPC codes.  

When decoded using the message-passing
algorithm over binary-input symmetric-output channels, the expected bit-error rate over the ensemble of LDPC
codes shows a \emph{threshold} phenomenon as
blocklength tends to infinity. There is a threshold channel parameter,
below which, the expected bit-error rate tends to fall rapidly for
large blocklength. The bit-error threshold, which is a function of the degree
distribution, is computed using a
procedure known as density evolution. The bit-error rate of a code in
the ensemble concentrates around the expected value; so, the threshold
is an important design parameter in practice. The practical design of LDPC codes
involves determining the degree distribution that maximizes the
threshold for a fixed rate. Given a degree distribution, a
parity-check matrix is sampled from the ensemble with several
heuristic criteria to simplify the complexity of implementation and for
acceptable performance \cite{ryan2009channel}.

The study of protograph LDPC codes, which are a special case of
Multi-Edge Type LDPC codes \cite{Richardson2004}, was initiated in \cite{Thrope}, and protograph LDPC
codes are the most popular codes today in theory (spatially-coupled
codes \cite{5695130,mitchell2014spatially}) and practice (included in WiFi and DVB-S2 standards). In
 \cite{5174517}, protographs are optimized for thresholds nearing
 capacity, and ensemble-averaged weight distribution is
 used to establish block-error threshold for protograph LDPC code
 ensembles. In \cite{5695100}, conditions on protograph for typical linear growth of
 minimum distance are derived. There have been numerous other work in the construction of
 protographs for several applications
 \cite{6253209,6266764,6905810,7005396,7045568}. Density evolution for
 protograph LDPC codes over the Binary Erasure Channel (BEC) was derived in
 \cite{LentmaierBECProtoDE}, and EXIT charts for protograph design
 were studied in \cite{Liva}.

While bit-error threshold is a popular design criterion, block-error
thresholds are important both from a theoretical and practical point
of view \cite{Lentermaier}. In spite of the importance, block-error
thresholds do not exist for many capacity-approaching degree
distributions that have degree-2 bit nodes. Another area of concern is random sampling in the construction
of LDPC and other modern codes. While concentration results are
useful, deterministic
constructions that have a provable block-error performance are the
ultimate goal of code design. Finally, the analytical properties of
protograph density evolution and
optimization of protographs using it are topics
that have not received much attention so far in the literature. This
work addresses the above shortfalls. 

The main contribution of this paper is the design and deterministic
construction of a sequence of large-girth, protograph LDPC codes with provable block-error
thresholds at rates approaching capacity. The idea of large-girth
 constructions, pioneered in Gallager's thesis \cite{gallagerthesis}, was studied in the
 context of block-error thresholds in
 \cite{Lentermaier} for the
 standard socket ensemble with minimum bit degree 3. In this work, the crucial
 property of double-exponential fall of message-error rate with
 iterations is extended to protograph LDPC ensembles that are allowed
 to contain degree-2 bit nodes under the condition that the degree-2
 subgraph of the protograph is cycle-free. The use of degree-2 bit
 nodes enables, through a carefully-designed differential evolution
 algorithm, the design of optimized protographs with block thresholds
 approaching capacity even at small sizes. To the best of our knowledge, the construction in this
work is perhaps the first deterministic LDPC code sequence with
guaranteed block-error rate behavior at rates close to capacity. As a specific example, we provide
a deterministic rate-1/2 protograph LDPC code sequence with a block-error threshold of 0.4953
over the BEC. 

The rest of this article is organized as follows. Section
\ref{sec:protograph-ldpc-code} introduces protograph LDPC codes and
their notation. The crucial property of double-exponential decay for
protograph density evolution and its stability are described in
Section \ref{sec:dens-evol-double}. The construction of large-girth
protograph LDPC codes is presented in Section
\ref{sec:large-girth-prot}. The optimization of protographs and
simulation results are given in Sections \ref{sec:optim-prot} and
\ref{sec:simulation-result}, respectively. Concluding remarks are made
in Section \ref{sec:conclusion}.
\section{Protograph LDPC Codes}
\label{sec:protograph-ldpc-code}
Following the notation in \cite{Thrope}, a protograph $G=(V\cup C,E)$ is a
bipartite graph with the bipartition $V$ and $C$ called the set of
variable or bit and
check nodes, respectively, and $E$ being the set of undirected edges
that connect a variable node in $V$ to a check node in $C$. Multiple
parallel edges are allowed between a variable node and a check
node. The nodes and edges in the protograph are ordered, and the $i$-th variable node, check node and edge in the protograph
are denoted, respectively, $v_i$, $c_i$ and $e_i$. The variable and
check nodes connected by an edge $e_i$ are denoted $v(e_i)$ and
$c(e_i)$, respectively.

A protograph can be represented by a base
matrix $B$ of dimension $|C|\times|V|$, whose $(i,j)$-th element $B(i,j)$ is the number of edges between $c_i$ and $v_j$. For example, consider a base matrix 
\begin{align}
B=\begin{bmatrix}
1&1&1&2\\
1&1&1&1
\end{bmatrix}.
\label{eq:23}
\end{align}
The protograph corresponding to the above base matrix is shown in Fig. \ref{fig:protex}.
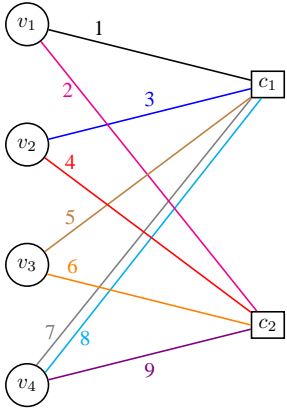
\begin{figure}[htb]
\centering
\begin{tikzpicture}[every node/.style={scale=0.8}]
\begin{scope}[node distance=2cm,>=angle 90,semithick]
\node[cnode] (v1) {$v_1$};
\node[crnode] (v2)[below of=v1] {$v_2$};
\node[cgnode] (v3)[below of=v2] {$v_3$};
\node[cpnode] (v4)[below of=v3] {$v_4$};
\node[conode] (c1)[right of=v2,xshift=2cm,yshift=1cm] {$c_1$}
  edge[black] node[near end,above]{1}(v1)
  edge[blue] node[above]{3} (v2);
\draw[gray] (v4.65) -- node[very near start,left]{7}(c1.220);
\draw[cyan] (v4.35) -- node[very near start,right]{8}(c1.245);
\node[conode] (c2)[right of=v3,xshift=2cm,yshift=-1cm] {$c_2$}
  edge[magenta] node[very near end,below]{2} (v1)
  edge[violet] node[midway,below]{9}(v4);
\draw[brown] (v3) --node[very near start,above]{5} (c1);
\draw[orange] (v3.335) --node[very near start,above]{6} (c2.180);
\draw[red] (v2) --node[very near start,above]{4}(c2);
\end{scope}    
\end{tikzpicture}
\caption{The protograph for the base matrix in \eqref{eq:23}.}
\label{fig:protex}
\end{figure}
The 9 different edges in this example are numbered as shown in the figure.

\subsection{Lifted graphs}
A copy and permute operation is applied to a protograph to obtain expanded or lifted graphs of different sizes \cite{Thrope}. A given protograph $G$ is copied, say $T$ times,
with the $t$-th copy having variable nodes denoted $(v,t)$, check
nodes denoted $(c,t)$, and edges denoted $(e,t)$ with $v\in V$, $c\in
C$ and $e\in E$. Then, for each edge $e$ in the protograph, we assign a permutation $\pi_e$ of the set $\{1,2,\ldots,T\}$. 
In the permute operation, an edge $(e,t)$ connecting $(v,t)$ and $(c,t)$ is permuted so as to connect variable node $(v,t)$ to check node $(c,\pi_e(t))$. An edge $(e_i,t)$ in the lifted graph is said to be of
type $e_i$ or, simply, type $i$. 

We will denote a lifted graph as $G(T,\Pi)=(V(T)\cup C(T),E(T,\Pi))$,
where $\Pi=\{\pi_e:e\in E\}$, or simply $G'=(V'\cup C',E')$ when the
exact $T$ and $\Pi$ are either clear or not critical. A lifted graph of a protograph can be thought of as a Tanner graph of an LDPC code, which is referred to as a protograph LDPC code.
The collection of these lifted graphs is called the
protograph ensemble of LDPC codes defined by $G$. Protograph LDPC codes are a 
special class of multi edge type (MET)-LDPC codes \cite{Richardson2004} with each edge in
the protograph being of a different type. The (designed) rate of the protograph LDPC code is
given by $1-|C|/|V|$.  The degree distribution of
check and variable nodes in the lifted graph is the same as that of
the protograph, but the protograph LDPC codes have a richer structure
than the standard ensemble when we consider computation graphs \cite{mct}. 

\subsection{Tree computation graphs}
Consider an edge of type $i$ in the lifted graph $G'$. The
$l$-iteration computation graph for the edge is defined as the subgraph of $G'$ obtained by traversing
down to depth $2l$ along all adjacent edges at the
variable node end \cite{mct}. An important observation is that the
vertex degrees and edge types in the computation graph are completely
determined by the protograph $G$. Further, let us suppose that the girth of $G'$ is greater than
$2l$, which makes the $l$-iteration computation graph a tree with no repeated
nodes. It is clear that the $l$-iteration tree computation graph for an edge
of a particular type in the protograph ensemble is
\emph{deterministic} and \emph{unique} in the sequence of vertex
degrees and edge types encountered. The
protograph $G$ completely determines the sequence of degrees and edge
types. So, in comparison with the standard socket ensemble \cite{mct},
no assumption on the distribution of tree computation
graphs is needed, and this makes density evolution analysis for large-girth protograph codes precise.

\section{Density Evolution and Double-Exponential Fall Property}
\label{sec:dens-evol-double}
A crucial fact that enables precise block-error rate guarantees from density evolution
is the property of double-exponential fall of message-error rate with
iterations \cite{Lentermaier}. For the standard socket ensemble, double-exponential fall
is possible only when the minimum degree is at least 3. In this section, we describe protograph density evolution
and show that double-exponential fall is possible even when degree-2
nodes are included in the protograph. We begin with the case of the binary erasure channel (BEC). 

\subsection{Binary erasure channel}
Let us consider the standard message-passing decoder \cite{mct} over a binary
erasure channel with erasure probability $\epsilon$, denoted BEC$(\epsilon)$, run on a lifted graph $G'$ derived
from a protograph $G=(V\cup C,E)$. Since the lifted graphs form an MET
ensemble with $|E|$ edge types, density evolution proceeds with $|E|$
erasure probabilities, one for each edge in the protograph
\cite{Richardson2004}. Let $x_t(i)$ be the probability that an erasure is sent from variable node
to check node along edge type $e_i$ in the $t$-th iteration. Similarly, let
$y_t(j)$ be the probability that an erasure is sent from check node to
variable node along edge type $e_j$ in the $t$-th iteration. The
protograph density evolution recursion \cite{LentmaierBECProtoDE} is given by
\begin{align}
\label{eq:30}x_0(i)&=\epsilon,\\
\label{eq:1}y_{t+1}(j)&=1-\prod_{i\in E_c(e_j)}(1-x_{t}(i)),\\
\label{eq:2}x_{t+1}(i)&=\epsilon\prod_{j\in E_v(e_i)}y_{t+1}(j),
\end{align}
for $t\ge0$ and $1\le i,j\le |E|$, where
$E_c(e)=\{i:c(e)=c(e_i),e\ne e_i\}$ and $E_v(e)=\{i:v(e)=v(e_i),e\ne e_i\}$ are the sets of other edge types
incident to the same check node and variable node, respectively, as
the edge $e$. The density evolution threshold, denoted
$\epsilon_{\text{th}}$, for the protograph-based LDPC code ensemble is  
defined as the supremum of the set of $\epsilon$ for which erasure
probability on each edge of the protograph tends to zero, as
$t\rightarrow \infty$,
i.e. $\epsilon_{\text{th}}=\sup\{\epsilon:\max_{i}x_t(i)\to0\}$. All
protographs in this work have minimum bit-node degree 2 ensuring that
$\epsilon_{\text{th}}$ is the threshold for variable node erasure
probability as well.

\subsubsection{Double-exponential fall}
\label{sec:asympt-behav-dens}
Consider a protograph $G$ with density evolution recursion as defined
in \eqref{eq:30}~-~\eqref{eq:2}. Because the recursion steps for
$x_{t+1}(i)$ and $y_{t+1}(j)$ involve
the neighbors of the edges $e_i$ and $e_j$, it is useful to visualize
\eqref{eq:30}~-~\eqref{eq:2} as iterative message passing on $G$ with
bit-to-check messages $x_t(i)$ and check-to-bit messages $y_t(j)$ in
iteration $t$. So, it is easy to see that all variable nodes in walks of
length $2t+1$ starting with $e_i$ are visited in the
computation of $x_t(i)$. Every time a variable node of degree at least
3 is traversed, multiplication of two or more terms occurs in the recursion as
per \eqref{eq:2} resulting in a squaring or higher power effect. Variable nodes of degree 2 result in a linear term
with no squaring. It turns out that ensuring at least a squaring effect at
regular intervals in every
walk is sufficient for double-exponential fall, and
this is made precise in the following theorem.
\begin{theorem}
Let a protograph $G$ be such that (1) there are no loops involving
only degree-2 variable nodes, and (2) every degree-2 variable node is
connected to a variable node of degree at least 3. Then, for
$\epsilon<\epsilon_{\text{th}}$, 
\begin{equation}
  \label{eq:31}
  x_t(i)=O(\exp(-\beta 2^{\alpha t}))
\end{equation}
for sufficiently large $t$, where $\alpha$, $\beta$ are positive constants.
\label{thm:double-expon-fall}
\end{theorem}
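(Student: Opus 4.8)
The plan is to track the message-error rate along walks in the protograph and show that squaring happens often enough to produce the iterated-exponential bound. First I would set up the key combinatorial fact: since the degree-2 subgraph of $G$ is cycle-free (condition (1)) and every degree-2 variable node touches a variable node of degree $\ge 3$ (condition (2)), there is a finite constant $L$ (depending only on $G$, essentially the number of variable nodes or the length of the longest simple path in the degree-2 subgraph plus one) such that \emph{every} walk in $G$ of length $L$ starting along any edge must traverse at least one variable node of degree $\ge 3$. Indeed, a walk that avoids all degree-$\ge 3$ variable nodes for too long would have to repeat a degree-2 variable node, forcing a cycle among degree-2 nodes, contradicting (1). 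I would make this precise by bounding $L$ by (twice) the number of degree-2 variable nodes.

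Next I would convert this into a recursion on a scalar upper bound. Since $\epsilon < \epsilon_{\mathrm{th}}$, by definition of the threshold there is some finite $t_0$ and some $\delta \in (0,1)$ with $x_{t_0}(i) \le \delta$ for all $i$; moreover, because the recursion \eqref{eq:1}--\eqref{eq:2} is monotone in the $x_t(i)$ and contractive below threshold, we may take $\delta$ as small as we like by choosing $t_0$ large. Let $p_t = \max_i x_t(i)$. From \eqref{eq:1}--\eqref{eq:2}, unrolling the recursion over $L$ iterations and expanding the product of $(1-x)$ factors, I would obtain a bound of the form $p_{t+L} \le A\, p_t^{\,2}$ for all $t \ge t_0$, where $A$ is a constant depending only on the maximum degrees in $G$ (the number of terms that appear when the nested products are multiplied out). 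The factor-of-two power in the exponent is exactly the "at least one squaring per $L$ steps" guaranteed by the combinatorial lemma: a degree-$\ge 3$ variable node forces at least two $y$-factors in \eqref{eq:2}, each of which, being itself at most $p_t$ up to constants via \eqref{eq:1}, contributes a factor $\le C p_t$. The degree-2 nodes along the way only contribute benign linear factors absorbed into $A$.

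Finally I would iterate the inequality $p_{t+L} \le A p_t^2$. Writing $q_k = A\, p_{t_0 + kL}$, this becomes $q_{k+1} \le q_k^2$, so $q_k \le q_0^{2^k}$, i.e. $p_{t_0+kL} \le A^{-1} (A p_{t_0})^{2^k}$. Choosing $t_0$ large enough that $A p_{t_0} \le e^{-c}$ for some $c>0$ gives $p_{t_0+kL} \le A^{-1} \exp(-c\, 2^{k})$. Re-expressing $k = (t - t_0)/L$ and taking $\alpha = 1/L$, $\beta$ a suitable positive constant (and using monotonicity to fill in the values of $t$ that are not of the form $t_0 + kL$), this yields $x_t(i) = O(\exp(-\beta 2^{\alpha t}))$, which is \eqref{eq:31}.

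I expect the main obstacle to be the second step: carefully justifying the clean scalar bound $p_{t+L} \le A p_t^2$ from the vector recursion. One must (i) handle the fact that different edges see different walk structures, so the "one squaring in $L$ steps" holds per walk but must be aggregated into a bound on the max; (ii) control the constant $A$ uniformly, making sure the expansion of $\prod (1-x_t(i))$ does not generate a term that is only \emph{linear} in $p_t$ along some unlucky walk — this is where condition (2) is essential, since it prevents a degree-2 node from being "shielded" so that its squaring contribution is deferred indefinitely; and (iii) ensure $\delta = A p_{t_0}$ can genuinely be driven below $1$ (indeed below $e^{-c}$), which uses that $\epsilon$ is strictly below threshold so that $p_t \to 0$. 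Once these bookkeeping points are settled, the iteration to the double-exponential bound is routine.
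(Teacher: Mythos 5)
Your proposal follows essentially the same route as the paper's proof: the combinatorial fact that every sufficiently long walk must hit a degree-$\ge 3$ variable node (the paper proves this by contradiction via a walk of length $|v_2|+1$ through only degree-2 nodes), the resulting scalar squaring recursion $\bar{x}_{t+L}\le A\bar{x}_t^2$ with $L=|v_2|+1$ obtained by propagating per-edge exponent bounds through \eqref{eq:1}--\eqref{eq:2}, and the iteration to the double-exponential bound once $A\bar{x}_{t_0}<1$, which uses $\epsilon<\epsilon_{\text{th}}$ exactly as you describe. The bookkeeping you flag as the main obstacle is handled in the paper by tracking exponents $m(l,i)\in\{1,2\}$ per edge together with the linearization $(d-1)x\ge 1-(1-x)^{d-1}$ at the check nodes, which is precisely the aggregation step you anticipate.
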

\begin{proof}
Let $\bar{x}_t=\max_i x_t(i)$ and let $|v_2|$ be the number of
degree-two variable nodes in $G$. We will show that there exists a
positive integer $R$ such that, for $t\ge R$,
\begin{align}
	\label{eq:6}\bar{x}_{t+|v_2|+1}\leq A (\bar{x}_t)^2,
\end{align}
where $A$ is a constant independent of $t$, and $A\bar{x}_R<1$. By
repeatedly applying \eqref{eq:6}, we
can readily show that 
\begin{equation}
  \label{eq:9}
\bar{x}_{R+i(|v_2|+1)}\leq A^{-1}(A\bar{x}_R)^{2^i},  
\end{equation}
for a positive integer $i$, which implies \eqref{eq:31}.

Suppose we have the upper bounds $x_{t+l}(i)\le C_l\bar{x}^{m(l,i)}_t$, where
$l\in\{0,1,\ldots,|v_2|\}$, $m(l,i)\in\{1,2\}$ and $C_l$ is a positive
constant independent of $t$. We will propagate the bounds through one round of \eqref{eq:1}~-~\eqref{eq:2} to obtain bounds $x_{t+l+1}(i)\le
C_{l+1}\bar{x}^{m(l+1,i)}_t$. For $l=0$, we have $C_0=1$ and
$m(0,i)=1$. We will show that $m(|v_2|+1,i)=2$ for all $i$, which
proves \eqref{eq:6}. 

We will use the following inequality. For any $x\in[0,1]$
and a positive integer $d$, 
\begin{align}
\label{eq:3}(d-1)x\geq1-(1-x)^{d-1}.
\end{align}
First consider the RHS of \eqref{eq:1} for a fixed $j$. Let 
\begin{equation}
  \label{eq:34}
n(l,j)=\min_{i\in E_c(e_j)}m(l,i).   
\end{equation}
Since $\bar{x}_t\le 1$, we have $x_{t+l}(i)\le C_l\bar{x}^{n(l,j)}_t$. So, 
\begin{align*}
	1 - x_{t+l}(i) &\geq 1 - C_l\bar{x}^{n(l,j)}_t, \quad i\in E_c(e_j),\\
	\Rightarrow \prod_{i \in E_c(e_j)} \left( 1 - x_{t+l}(i) \right) &\geq \left( 1 - C_l\bar{x}^{n(l,j)}_t \right)^{r(j)-1},
\end{align*}
where $r(j)$ is the degree of $c(e_j)$. Since $\bar{x}_t\to0$, for
large enough $t$, we have $C_l\bar{x}^{n(l,j)}_t<1$. So, using
\eqref{eq:1} and \eqref{eq:3}, we get, for large enough $t$, 
\begin{align}
	y_{t+l+1}(j) &\leq (r(j)-1) C_l\bar{x}^{n(l,j)}_t. \label{eq:4}
\end{align}
Now consider \eqref{eq:2} for a fixed $i$. We get
\begin{align}
	x_{t+l+1}(i) &= \epsilon \prod_{j \in E_v(e_i)} y_{t+l+1}(j)  \notag\\
	&\leq \epsilon ((r_{\max}-1)C_l)^{l(i) - 1}\prod_{j \in E_v(e_i)}\bar{x}^{n(l,j)}_t, \label{eq:5}\\
       &\leq \epsilon ((r_{\max}-1)C_l)^{l(i) -1}\bar{x}^{m(l+1,i)}_t,\label{eq:32}\\
&\le C_{l+1}\bar{x}^{m(l+1,i)}_t,
\end{align}
where $l(i)$ is the degree of $v(e_i)$, $r_{\max}$ is the maximum
check-node degree ($r_{\max}\ge2$), $C_{l+1}=\epsilon
\max_i((r_{\max}-1)C_l)^{l(i) -1}$ and we set
\begin{equation}
  \label{eq:33}
m(l+1,i)=\begin{cases}
1, \text{ if }\sum_{j \in E_v(e_i)}n(l,j)=1,\\
2, \text{ if }\sum_{j \in E_v(e_i)}n(l,j)\ge2.  
\end{cases}
\end{equation}
Note that $\sum_{j \in E_v(e_i)}n(l,j)=1$ only when $v(e_i)$ is a degree-2 variable node and $n(l,j)=1$ for the single
edge $e_j\in E_v(e_i)$. 

We now claim that $m(|v_2|+1,i)=2$. The proof for the claim is by
contradiction. Suppose that $m(|v_2|+1,i)=1$ for some $e_i$. Then, for
the single edge $e_j\in E_v(e_i)$, $n(|v_2|,j)=1$, which in turn
implies $m(|v_2|,i')=1$ for some $e_{i'}\in E_c(e_j)$. Proceeding in
this manner, there exists a walk in $G$ of length $|v_2|+1$ containing only degree-2 variable nodes. This is a contradiction
because $G$ has exactly $|v_2|$ degree-2 variable nodes, and by the
assumptions of the theorem, $G$ has no loops involving degree-2 variable
 nodes, and every degree-2 variable node in $G$ is connected to at least one
 variable node of degree at least 3. 
\end{proof}
Now, if there is a cycle involving degree-2 nodes in the protograph,
we can show, using a method similar to the proof above (after setting
$x_t(i)=0$ when $v(e_i)$ has degree at least 3), that 
$x_t(i)$ for an edge $e_i$ in the degree-2 cycle falls at most
exponentially with $t$. Therefore, the degree-2 subgraph being
cycle-free is a necessary and sufficient condition for
double-exponential fall of message error probability in protograph
density evolution. We remark that the condition of degree-2 subgraph
being loopfree has been used before in the context of typical linear growth of minimum
distance \cite{5695100}\cite{5454136}.
\subsubsection{large-girth lifted graph sequences and block-error threshold}
\label{sec:large-girth-lifted}
Consider a protograph $G=(V\cup C,E)$ satisfying the conditions of
Theorem \ref{thm:double-expon-fall} and a lifted graph
$G'=G(T,\Pi)$. Let $n=T|V|$ denote the blocklength of the LDPC code
defined by $G'$, and consider message-passing decoding over BEC$(\epsilon)$. When $G'$ has girth $g$, the probability of erasure on
an edge of type $i$ from bit node to check node in iteration $t$ is
exactly equal to $x_t(i)$ if $t\le g/2-1$. So, for $t\le g/2-1$, the probability of
erasure from bit to check on any edge is upper bounded by
$\bar{x}_t$, and by the union bound, the probability of block error, denoted $P_B(n)$, is upper bounded as
$P_B(n)=O(n\bar{x}_t)$. In Section \ref{sec:large-girth-prot}, for a given protograph $G$, we provide constructions of lifted
graphs with large girth or girth growing as $\Theta(\log n)$. So, for
large-girth lifted graphs, the girth can be increased arbitrarily by
increasing $n$, and we have
\begin{equation}
  \label{eq:35}
  P_B(n)=O(n\bar{x}_t)=O(n\exp(-\beta2^{\alpha t}))
\end{equation}
for $\epsilon<\epsilon_{\text{th}}$ and sufficiently large $n$ using Theorem
\ref{thm:double-expon-fall}. Now, setting $t=c\log n$, $c>0$, in
\eqref{eq:35}, we get
\begin{equation}
  \label{eq:36}
  P_B(n)=O(n\exp(-\beta n^{c\alpha})),
\end{equation}
for $\epsilon<\epsilon_{\text{th}}$. By noting that $\lim_{n
  \rightarrow \infty} n^k  n \exp(-\beta n^{c\alpha}) \rightarrow 0$,
we can say that the block-error
probability $P_B(n)$ falls faster than $1/n^k$ for a positive integer $k$. 

Therefore, for large-girth protograph LDPC code sequences, if the protograph
satisfies the conditions of Theorem \ref{thm:double-expon-fall},
block-error threshold is equal to the bit error threshold $\epsilon_{\text{th}}$.

\subsubsection{Stability of protograph density evolution}
Let $\mathbf{x}_t=[x_t(1)\ x_t(2)\ \cdots\ x_t(|E|)]$ denote the vector of
bit-to-check erasure probabilities in iteration $t$
as per the protograph density evolution of
\eqref{eq:30}~-~\eqref{eq:2} for a protograph $G$. The density evolution recursion can be
represented as $\mathbf{x}_{t+1}=f(\mathbf{x}_t,\epsilon)$, where the $i$-th coordinate
of the vector function $f$ is 
\begin{equation}
  \label{eq:37}
 f_i(\mathbf{x}_t,\epsilon)=\epsilon\prod_{j\in E_v(e_i)}\left(1-\prod_{i'\in E_c(e_j)}(1-x_t(i'))\right).
\end{equation}
The monotonicity of $f_i$ with $x_t(i')$ and
$\epsilon$ is easy to establish \cite{Richardson2004}. We concern
ourselves with the stability of the recursion. Approximating $f$ using Taylor series around origin, we get
\begin{equation}
 \mathbf{x}_{t+1}=\nabla_f \; \mathbf{x}_t + e_f(\mathbf{x}_t),
\end{equation}
where $\nabla_f$ is the $|E|\times |E|$
gradient matrix of $f$ with $(i,i')$-th element,
denoted $[\nabla_f]_{ii'}$,
defined as the partial derivative of $f_i(\mathbf{x}_t,\epsilon)$ with respect
to $x_t(i')$ evaluated at the origin $\mathbf{x}_t=\mathbf{0}$, and $e_f(\mathbf{x}_t)$ is a length-$|E|$ vector satisfying
$||e_f(\mathbf{x}_t)||^2=O(||\mathbf{x}_t||^2)$ ($||\cdot||$ denotes Euclidean norm).
Letting $l(i)$ denote the degree of bit node $v(e_i)$,
we readily see from \eqref{eq:37} that
\begin{equation}
  \label{eq:38}
  [\nabla_f]_{ii'}=\left.\pderiv{f_i(\mathbf{x}_t,\epsilon)}{x_t(i')}\right|_{\mathbf{x}_t=\mathbf{0}}=\begin{cases}
0,\text{ if }l(i)\ne2,\\
\epsilon,\text{ if }l(i)=2, i'\in E_c(e_j),
\end{cases}
\end{equation}
where, for the case $l(i)=2$, $e_i$ and $e_j$ are the two edges
connected to the degree-2 node $v(e_i)$.

For sufficiently small $\mathbf{x}_t$, the convergence of
$\mathbf{x}_{t+1}=f(\mathbf{x}_t,\epsilon)$ to $\mathbf{0}$ depends on the
eigenvalues of $\nabla_f$ being less than one \cite{horn1990matrix}. To study the
eigenvalues of $\nabla_f$, we use Perron-Frobenius theory on eigenvalues of
non-negative matrices following \cite{RothblumNonnegative06}. For this purpose, we introduce some notation
and definitions. 

A directed graph $D(\mathbf{A})$ is associated with a nonnegative $n\times n$
matrix $\mathbf{A}$. The vertex set of $D(\mathbf{A})$ is $\{1,2,\ldots,n\}$ with a
directed edge from $i$ to $j$ if and only if the $(i,j)$-th element of
$\mathbf{A}$, is nonzero. A directed graph $D$ is said to be
\emph{strongly connected}
if there is a directed path between any two vertices of $D$. A nonnegative
square matrix $\mathbf{A}$ is said to be \emph{irreducible} if $D(\mathbf{A})$ is
strongly connected. For a non-negative square matrix $\mathbf{A}$, there exists
a permutation matrix $\mathbf{P}$ such that 
\begin{equation}
  \label{eq:39}
\mathbf{P}\mathbf{A}\mathbf{P}^T=\begin{bmatrix}
\mathbf{A}_{11}&\mathbf{A}_{12}&\mathbf{A}_{13}&\cdots&\mathbf{A}_{1s}\\
\mathbf{0}&\mathbf{A}_{22}&\mathbf{A}_{23}&\cdots&\mathbf{A}_{2s}\\
\mathbf{0}&\mathbf{0}&\mathbf{A}_{33}&\cdots&\mathbf{A}_{2s}\\
\vdots&\vdots&\ddots&\ddots&\vdots\\
\mathbf{0}&\mathbf{0}&\cdots&\mathbf{0}&\mathbf{A}_{ss}
\end{bmatrix},  
\end{equation}
where $\mathbf{A}_{ii}$ is either a square irreducible matrix or a $1\times 1$
zero matrix. The block upper-triangular form of \eqref{eq:39} is called the \emph{Frobenius
  normal form} of $\mathbf{A}$. Note that $D(\mathbf{P}\mathbf{A}\mathbf{P}^T)$ is isomorphic to $D(\mathbf{A})$
with vertices permuted by $\mathbf{P}$, and the eigenvalues of $\mathbf{P}\mathbf{A}\mathbf{P}^T$ are the
same as that of $\mathbf{A}$. So, for the purposes of stability, we will assume that the gradient matrix $\nabla_f$ is in Frobenius
normal form with the diagonal blocks denoted as $\nabla_{ii}$, $1\le
i\le s_f$, where $s_f$ denotes the number of diagonal blocks. The subgraphs $D(\nabla_{ii})$ are called the strongly
connected components of $D(\nabla_f)$.

The next two lemmas connect edges and cycles in $D(\nabla_f)$ to the
structure of the protograph $G$.
\begin{lemma}
The directed graph $D(\nabla_f)$ has an edge from $i$ to $i'$ if and
only if $l(i)=2$ and $i'\in E_c(e_j)$, where $e_j$ is the single edge
in $E_v(e_i)$. This implies the following: (1) Vertex $i$ is in a
strongly-connected component of $D(\nabla_f)$ only if $l(i)=2$; (2) for edge $(i,i')$ in $D(\nabla_f)$, there exists a path
$(e_i,e_j,e_{i'})$ in the protograph with $l(i)=l(j)=2$.
\label{lem:gradedge}
\end{lemma}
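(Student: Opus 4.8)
The plan is to obtain the edge set of $D(\nabla_f)$ directly from the gradient computation \eqref{eq:38} and then read off the two structural consequences. By the definition of the digraph associated with a nonnegative matrix, $D(\nabla_f)$ has a directed edge $i\to i'$ exactly when $[\nabla_f]_{ii'}\ne 0$. Equation \eqref{eq:38} — whose derivation uses only the product structure of $f_i$ in \eqref{eq:37} together with the fact that each factor $1-\prod_{i'\in E_c(e_j)}(1-x_t(i'))$ vanishes at the origin, so every first-order derivative of $f_i$ at $\mathbf{0}$ is zero once $f_i$ is a product of two or more such factors — tells us that $[\nabla_f]_{ii'}$ equals $\epsilon$ when $l(i)=2$ and $i'\in E_c(e_j)$, with $e_j$ the unique edge of $E_v(e_i)$ (a singleton precisely because $v(e_i)$ has degree $2$), and equals $0$ otherwise. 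Since $\epsilon>0$, this is exactly the asserted characterization of the edges of $D(\nabla_f)$.

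For consequence (1), I would argue that if vertex $i$ belongs to a strongly connected component of $D(\nabla_f)$ — read as a component that actually carries a cycle, i.e., a diagonal block $\nabla_{ii}$ of the Frobenius normal form that is irreducible rather than a $1\times1$ zero — then $i$ lies on a directed cycle and in particular has an outgoing edge in $D(\nabla_f)$; by the characterization above, an outgoing edge from $i$ forces $l(i)=2$. For consequence (2), take an edge $(i,i')$ of $D(\nabla_f)$. Then $l(i)=2$ and $i'\in E_c(e_j)$, where $e_j$ is the single edge in $E_v(e_i)$. Because $e_j\in E_v(e_i)$ we have $v(e_j)=v(e_i)$, hence $l(j)=l(i)=2$; because $i'\in E_c(e_j)$ we have $c(e_{i'})=c(e_j)$ and $e_{i'}\ne e_j$. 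Thus $e_i$ and $e_j$ meet at the variable node $v(e_i)$ and $e_j$ and $e_{i'}$ meet at the check node $c(e_j)$, so the three edges trace the path $(e_i,e_j,e_{i'})$ in $G$, both of whose variable nodes, $v(e_i)=v(e_j)$, have degree $2$.

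Neither consequence presents a real obstacle once \eqref{eq:38} is available; the content is essentially bookkeeping. The only points needing care are interpretive: in (1) the phrase ``strongly connected component'' must mean one that contains a cycle (every vertex trivially sits in a singleton component, so the literal reading would be false), and in (2) ``path'' should be understood as the length-$3$ walk $e_i\,e_j\,e_{i'}$ — if $G$ has parallel edges between a degree-$2$ variable node and a check node, $e_i$ and $e_j$ can form a double edge and the endpoints $c(e_i),c(e_j)$ need not be distinct, but this changes nothing about the identification of the degree-$2$ vertices $v(e_i)=v(e_j)$ involved. If one also wants the consequences to be ``iff'' rather than ``only if'', the converses follow by the same reading of \eqref{eq:38}, but the one-directional statements are all that is needed downstream.
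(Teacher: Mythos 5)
Your argument is correct and follows essentially the same route as the paper's proof, which simply reads the edge set of $D(\nabla_f)$ off of \eqref{eq:38} and notes that a vertex in a (non-trivial) strongly connected component must have an outgoing edge; your write-up just spells out the bookkeeping and the interpretive caveats more explicitly. No gaps.
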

\begin{proof}
The lemma is a restating of \eqref{eq:38}. Claim (1) follows because
an edge needs to originate out of a vertex in a strongly-connected
component. Claim (2) follows from \eqref{eq:38}.
\end{proof}

\begin{lemma}
There is a length-$l$ cycle in $D(\nabla_f)$ if and only if there
is a length-$2l$ cycle in the subgraph of the protograph induced by degree-2 bit nodes.
\label{lem:gradcycle}
\end{lemma}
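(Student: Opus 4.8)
The plan is to read Lemma~\ref{lem:gradcycle} off Lemma~\ref{lem:gradedge} by ``unfolding'' each directed edge of $D(\nabla_f)$ into a length-two walk of the protograph and then concatenating these around a cycle. The factor of two is forced by the bookkeeping: a single directed edge of $D(\nabla_f)$ is assembled from exactly one degree-2 bit node together with one check node, hence from two protograph edges. So no idea beyond Lemma~\ref{lem:gradedge} is needed; the work is entirely in matching up the two cyclic structures, and the only delicate point is whether one wants the statement at the level of simple cycles or of closed walks.

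First I would pin down the local picture. By Lemma~\ref{lem:gradedge}, a directed edge $(i,i')$ of $D(\nabla_f)$ occurs precisely when $l(i)=2$ and $i'\in E_c(e_j)$, where $e_j$ is the unique edge of $E_v(e_i)$; equivalently, $e_i$ and $e_j$ are the two edges at the degree-2 bit node $v(e_i)$, and $c(e_j)=c(e_{i'})$ with $e_{i'}\neq e_j$. Thus $(i,i')$ corresponds to the length-two protograph walk that leaves the degree-2 bit node $v(e_i)$ along $e_j$, passes through the check node $c(e_j)=c(e_{i'})$, and continues along $e_{i'}$ to $v(e_{i'})$. When $(i,i')$ lies on a directed cycle, both $i$ and $i'$ belong to a strongly connected component of $D(\nabla_f)$, so Lemma~\ref{lem:gradedge}(1) makes $v(e_i)$ and $v(e_{i'})$ degree-2, and this walk lies inside the subgraph of $G$ induced by degree-2 bit nodes.

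For the ``only if'' direction, let $i_1\to i_2\to\cdots\to i_l\to i_1$ be a length-$l$ cycle of $D(\nabla_f)$, write $u_k:=v(e_{i_k})$ for the degree-2 bit node and $e_{j_k}$ for the partner of $e_{i_k}$ at $u_k$, so that $c(e_{j_k})=c(e_{i_{k+1}})$ with indices taken mod $l$. Concatenating the $l$ length-two walks above at these shared check nodes yields
\begin{equation*}
u_1\xrightarrow{e_{j_1}}c(e_{i_2})\xrightarrow{e_{i_2}}u_2\xrightarrow{e_{j_2}}c(e_{i_3})\xrightarrow{e_{i_3}}\cdots\xrightarrow{e_{j_l}}c(e_{i_1})\xrightarrow{e_{i_1}}u_1,
\end{equation*}
a closed walk of length $2l$ through the degree-2 bit nodes $u_1,\dots,u_l$ and the check nodes $c(e_{i_1}),\dots,c(e_{i_l})$, i.e.\ a length-$2l$ cycle of the degree-2 subgraph. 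For the ``if'' direction, let $v_1,c_1,v_2,c_2,\dots,v_l,c_l$ be a length-$2l$ cycle of the degree-2 subgraph with every $v_k$ of degree $2$, and write $a_k$ for the edge $v_kc_k$ and $b_k$ for the edge $c_kv_{k+1}$. Since $v_k$ has degree $2$, its two edges are exactly $a_k$ and $b_{k-1}$, so $E_v(a_k)=\{b_{k-1}\}$; and since $a_{k-1}$ is a second edge incident to $c_{k-1}=c(b_{k-1})$, we get $a_{k-1}\in E_c(b_{k-1})$. Hence $(a_k,a_{k-1})$ is a directed edge of $D(\nabla_f)$ for each $k$, and $a_1\to a_l\to a_{l-1}\to\cdots\to a_2\to a_1$ is a length-$l$ cycle of $D(\nabla_f)$.

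The step I expect to need the most care is keeping these correspondences honest about \emph{simple} cycles rather than merely closed walks. The ``if'' construction already gives a simple cycle, since the $a_k$ are pairwise distinct (each is pinned down by its degree-2 endpoint $v_k$). The ``only if'' construction, however, can in principle produce a closed walk that revisits bit or check nodes; here I would observe that the constructed walk never immediately backtracks (the two edges $e_{i_k},e_{j_k}$ used at a bit node are distinct, and so are the two edges $e_{j_k},e_{i_{k+1}}$ used at a check node), so any repetition lets one extract a strictly shorter closed walk, and a shortest such walk is an honest cycle. Pushing that cycle back through the ``if'' construction gives a cycle of $D(\nabla_f)$, so $D(\nabla_f)$ has a cycle if and only if the degree-2 subgraph does, with girths related by a factor of two---which is what the subsequent Perron--Frobenius stability analysis actually uses. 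If one simply reads ``cycle'' throughout as ``closed walk'' (the natural convention here), the two constructions above are mutually inverse bijections between directed closed walks of length $l$ in $D(\nabla_f)$ and non-backtracking closed walks of length $2l$ in the degree-2 subgraph, and this last bookkeeping disappears.
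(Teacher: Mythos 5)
Your proof is correct and follows the same route as the paper: both directions are read off Lemma~\ref{lem:gradedge} by unfolding each directed edge of $D(\nabla_f)$ into a two-edge path through a degree-2 bit node and a check node, and concatenating these around the cycle. You are in fact more thorough than the paper, whose proof writes out only the forward direction (a cycle in $D(\nabla_f)$ yields a cycle in $G_2$), omits the converse, and does not address the closed-walk-versus-simple-cycle point that you correctly identify as the only delicate step (and correctly observe is harmless for the stability theorem, which only needs existence of cycles).
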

\begin{proof}
Let $(e_{i_1},e_{i_2},\ldots,e_{i_l},e_{i_1})$ be a cycle in
$D(\nabla_f)$. This implies directed edges
$(e_{i_m},e_{i_{m+1}})$, $1\le m\le l-1$, and $(e_{i_l},e_{i_1})$ in
$D(\nabla_f)$. By Lemma \ref{lem:gradedge}, there are edges
$e_{j_1},e_{j_2},\ldots,e_{j_l}$ such that
$(e_{i1},e_{j_1},e_{i_2},e_{j_2},\ldots,e_{i_l},e_{j_l},e_{i_1})$ is a
cycle in the protograph and $l(i_m)=l(j_m)=2$ for $1\le m\le l$.
\end{proof}
Structural stability conditions on $G$ following from the above two lemmas are collected
in the next theorem.
\begin{theorem}
  Consider a protograph $G=(V\cup C,E)$ with gradient
  matrix $\nabla_f$, whose Frobenius normal form has diagonal blocks
  $\nabla_{ii}$ ($1\le i\le s_f$). Let $G_2$ denote the subgraph of
  $G$ induced by degree-two bit nodes. Let $E_2$ denote set of edges of
  $G$ incident on degree-two bit nodes. Protograph density evolution over
  BEC$(\epsilon)$ is stable in each of the following
  cases:
  \begin{enumerate}
  \item for all $\epsilon$, if $G_2$ is cycle-free.
  \item for $\epsilon<1$, if no two cycles of $G_2$ overlap in an edge.
  \item for $\epsilon<1/r_{\max}$, where $r_{\max}=\max_{e\in
      E_2}|E_c(e)\cap E_2|$.
  \end{enumerate}
\label{thm:stab}
\end{theorem}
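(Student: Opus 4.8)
The plan is to analyze the spectral radius of $\nabla_f$ by reducing it to the spectral radii of the irreducible diagonal blocks $\nabla_{ii}$, since for a block upper-triangular matrix the set of eigenvalues is the union of the eigenvalues of the diagonal blocks; hence $\rho(\nabla_f) = \max_i \rho(\nabla_{ii})$. By Lemma~\ref{lem:gradedge}, any vertex lying in a nontrivial strongly connected component has $l(i)=2$, so every nontrivial $\nabla_{ii}$ is supported entirely on edges in $E_2$, and by \eqref{eq:38} each such block has all its nonzero entries equal to $\epsilon$. Case~(1) is then immediate: if $G_2$ is cycle-free, then by Lemma~\ref{lem:gradcycle} $D(\nabla_f)$ has no cycles, so every $\nabla_{ii}$ is the $1\times 1$ zero block, $\nabla_f$ is nilpotent, $\rho(\nabla_f)=0<1$, and density evolution converges for all $\epsilon$.

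For Cases~(2) and~(3) I would bound $\rho(\nabla_{ii})$ for a nontrivial irreducible block. Write $\nabla_{ii} = \epsilon M_i$ where $M_i$ is a $0/1$ matrix, so $\rho(\nabla_{ii}) = \epsilon\,\rho(M_i)$, and stability for a given $\epsilon$ requires $\epsilon < 1/\rho(M_i)$. For Case~(3), the row sums of $M_i$ are at most $r_{\max} = \max_{e\in E_2}|E_c(e)\cap E_2|$: indeed, the out-degree of vertex $i$ in $D(\nabla_f)$ is $|E_c(e_j)|$ where $e_j$ is the unique edge in $E_v(e_i)$, but within a strongly connected component only successors $i'$ with $l(i')=2$ survive, i.e.\ only $i' \in E_c(e_j)\cap E_2$. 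Since $\rho(M_i)$ is bounded above by the maximum row sum of any nonnegative matrix, $\rho(M_i)\le r_{\max}$, giving stability for $\epsilon < 1/r_{\max}$. For Case~(2), the hypothesis that no two cycles of $G_2$ share an edge means, via Lemma~\ref{lem:gradcycle}, that no two cycles of $D(\nabla_f)$ share a vertex (edge-disjoint cycles in $G_2$ of length $2l$ correspond to vertex-disjoint cycles in $D(\nabla_f)$ of length $l$; a shared vertex would force a shared protograph edge). A strongly connected directed graph in which no two cycles share a vertex is itself a single cycle, so each nontrivial $\nabla_{ii}$ is (a permutation-similar copy of) $\epsilon$ times a cyclic permutation matrix, whose spectral radius is exactly $\epsilon$. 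Hence $\rho(\nabla_f)\le\epsilon<1$ whenever $\epsilon<1$.

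The main obstacle I anticipate is the combinatorial step in Case~(2): carefully showing that ``no two cycles of $G_2$ overlap in an edge'' forces each strongly connected component of $D(\nabla_f)$ to be a simple directed cycle. One must rule out more complicated strongly connected graphs (e.g.\ two directed cycles meeting at a single vertex, or a ``theta''-type configuration), and translate each such configuration back through Lemma~\ref{lem:gradcycle} and Lemma~\ref{lem:gradedge} into a pair of protograph cycles sharing an edge. The key is that a directed edge $(i,i')$ in $D(\nabla_f)$ corresponds to a protograph path $e_i, e_j, e_{i'}$ with $e_i, e_j$ distinct $E_2$-edges and $e_j$ determined by $i$, so distinct closed walks in $D(\nabla_f)$ lift to distinct closed walks in $G_2$, and a shared vertex in $D(\nabla_f)$ lifts to a shared edge $e_j$ in $G_2$. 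Once that correspondence is pinned down, the eigenvalue bounds are routine applications of standard Perron--Frobenius facts (spectral radius of a cyclic permutation matrix; row-sum bound on the spectral radius), and invoking \cite{horn1990matrix} for the local-stability criterion $\rho(\nabla_f)<1$ completes the argument in all three cases.
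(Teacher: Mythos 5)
Your proposal follows essentially the same route as the paper: reduce to the irreducible diagonal blocks of the Frobenius normal form, use Lemma~\ref{lem:gradcycle} to get nilpotency in case (1), identify the strongly connected components as simple cycles (eigenvalues of modulus $\epsilon$) in case (2), and apply the row-sum bound on the spectral radius in case (3). The combinatorial step you flag in case (2) is exactly the one the paper uses (and states even more tersely), so your argument is correct and, if anything, slightly more detailed than the paper's.
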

\begin{proof}
1) If the subgraph of $G$ induced by degree-2 bit nodes is cycle-free, we
get, by Lemma \ref{lem:gradcycle}, that there are no cycles in
$D(\nabla_f)$. So, there are no strongly-connected subgraphs in
$D(\nabla_f)$, which implies that $\nabla_{ii}=0$ for all $i$ in the
Frobenius normal form of $\nabla_f$. Therefore, the eigenvalues of
$\nabla_f$ are all 0, implying stability for all $\epsilon$.

\noindent 2) If no two cycles of $G_2$ overlap in an edge, the cycles
of $D(\nabla_f)$ do not overlap in a vertex or an edge by Lemma
\ref{lem:gradcycle}. So, the strongly-connected components of $D(\nabla_f)$ are cycles. Since $D(\nabla_{ii})$ is a cycle, the eigenvalues of
    $\nabla_{ii}$ have absolute value equal to $\epsilon$ \cite{chung1997spectral}, implying
    stability for $\epsilon<1$.

\noindent 3) The result follows because the maximum eigenvalue of
    $\nabla_{ii}$ is upper bounded by its maximum row sum \cite{horn1990matrix}, and
    $\epsilon r_{\max}$ is an upper bound on the maximum row
    sum of the matrices $\nabla_{ii}$, $1\le i\le s_f$.
\end{proof}
From Theorems \ref{thm:double-expon-fall} and \ref{thm:stab}, the
degree-2 subgraph of the protograph being cycle-free emerges as an
important design condition. Next, we provide an example to illustrate
the stability conditions for protographs.
\begin{example}
Consider a protograph whose subgraph induced by degree-two bit nodes,
denoted $G_2$, is as shown in Fig. \ref{deg2subgraph}.
\begin{figure}[htb]
{
\label{degree2subgraph}
\begin{center}
\begin{tikzpicture}[every node/.style={scale=1}]
\node (G2) {\tikz{
\begin{scope}[node distance=2cm,>=angle 90,semithick]
\node[cnode] (v1) {};
\node[cnode] (v2)[below of =v1] {};
\node[cnode] (v3)[below of =v2] {};
\node[conode] (c1)[right of=v1,yshift=-1cm] {}
 edge node[above]{$e_1$} (v1)
 edge node[near end, above]{$e_5$} (v2)
 edge node[near end, above]{$e_3$} (v3);
\node[conode] (c2)[below of=c1] {}
 edge node[near end, above]{$e_4$} (v1)
 edge node[near end, above]{$e_2$} (v2)
 edge node[above]{$e_6$} (v3);
\node (G2l) [below of = v3, xshift=1cm,yshift=1.5cm] {$G_2$};
\end{scope}}
};
\node (D) [right of = G2,xshift=2.8cm]{\tikz{
\begin{scope}[node distance=2cm,>=angle 90,semithick]
\node[cnode] (t1) {$e_1$};
\node[cnode] (t2) [right of=t1,xshift=-1cm,yshift=-1cm] {$e_2$}
  edge[bend left,->] (t1)
  edge[bend right,<-] (t1);
\node[cnode] (t3) [below of=t2,yshift=0.586cm] {$e_3$}
  edge[bend left,->] (t2)
  edge[bend right,<-] (t2);
\node[cnode] (t4) [below of=t3,xshift=-1cm,yshift=1cm] {$e_4$}
  edge[bend left,->] (t3)
  edge[bend right,<-] (t3);
\node[cnode] (t5) [left of=t4,xshift=1cm,yshift=1cm] {$e_5$}
  edge[bend left,->] (t4)
  edge[bend right,<-] (t4);
\node[cnode] (t6) [left of=t1,xshift=1cm,yshift=-1cm] {$e_6$}
  edge[bend left,->] (t5)
  edge[bend right,<-] (t5)
  edge[bend left,->] (t1)
  edge[bend right,<-] (t1);
\node [below of = t4, yshift=1.2cm] {$D(\nabla_{11})$};
\end{scope}}
};    
\end{tikzpicture}
\caption{Illustration of stability for protograph density evolution.}
\label{deg2subgraph}
\end{center}
}
\end{figure}
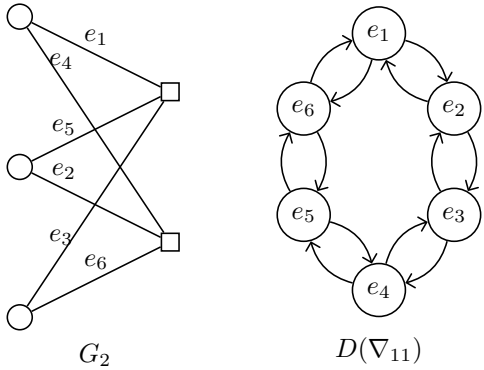
For such a protograph, the gradient graph has one non-trivial strongly-connected
component $D(\nabla_{11})$ as shown. Clearly, cases (1) and (2) of
Theorem \ref{thm:stab} do not apply. A quick calculation shows
$r_{\max}=2$, which results in stability for $\epsilon<0.5$. In this
case, an exact eigenvalue calculation matches with the bound based on $r_{\max}$.
\end{example}

\subsection{Binary-input symmetric channel}
\label{sec:AWGN}
The extension to binary-input symmetric channels uses the method of
Bhattacharya parameters, and we will be brief in our description
referring to \cite{mct} and \cite{Lentermaier} for details. A binary input channel $X\to Y$ with $X\in\{-1,+1\}$ is said to be
symmetric if the transition probability $p(Y|X)$ satisfies
$p(Y=y|X=+1)=p(Y=-y|X=-1)$. The standard message-passing decoder uses the
log-likelihood ratio (LLR) $l_0=\log\frac{p(y|+1)}{p(y|-1)}$ as input, and the
message passed from a bit node to a check node in iteration $t$ is
an LLR $l_t$ for the corresponding bit. The bit node operation is
simply addition, while the check node operation uses the standard $\tanh$
rule. Assuming that the all-$+1$s
codeword is transmitted and that the computation graph is a tree, the LLR
$l_t$ is of the form $\log\frac{p_t(y|+1)}{p_t(y|-1)}$, where $p_t(Y|X)$ is the
transition probability of a symmetric channel. Density evolution computes the
transition probability $p_t(Y|X=+1)$ using $p_{t-1}(Y|X=+1)$ and $p(Y|X)$.

For many symmetric channels of practical interest such as the
Binary-Input Additive White Gaussian Noise (BIAWGN) channel, the
transition probability $p(Y|X)$ is 
nonzero over the real line, which makes density evolution analysis cumbersome.
However, probability of message error in each iteration can be upper bounded by using the
Bhattacharyya parameter following \cite{Lentermaier}. The method in
\cite{Lentermaier} readily extends to protograph density evolution for a binary-input
symmetric channel as described next. 

The Bhattacharyya parameter for the channel corresponding to the bit-to-check message in the $t$-th iteration is defined as follows:
\begin{equation}
 B_t=\int_{-\infty}^{\infty}\sqrt{p_t(y|+1)p_t(y|-1)}dy.
\end{equation}
The probability of message error in the $t$-th
iteration is bounded above by the Bhattacharya parameter $B_t$.

Consider the standard message-passing decoder over a binary-input
symmetric channel $p(Y|X)$ run on a lifted graph $G'$ derived from a protograph
$G=(V\cup C,E)$. Protograph density evolution for this situation involves $|E|$ densities
$p^{(i)}_t(Y|X=+1)$, $1\le i\le|E|$, with corresponding Bhattacharya
parameters $B_t(i)$. The evolution of Bhattacharya parameters satisfies a
 set of inequalities given in the next lemma.
 \begin{lemma}
The Bhattacharyya parameters $B_t(i)$ satisfy
\begin{equation}
 B_{t+1}(i)\leq B_0\prod_{j\in E_v(e_i)}\sum_{i'\in
   E_c(e_j)}B_{t}(i')
\end{equation}
for $1\le i \le |E|$, where $B_0=\int_{-\infty}^{\infty}\sqrt{p(y|+1)p(y|-1)}dy$ is the
Bhattacharya parameter of the channel $p(Y|X)$, and $E_v$, $E_c$ are
as defined earlier.
\label{lem:bhatt}
 \end{lemma}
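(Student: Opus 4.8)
The plan is to mimic the standard (non-protograph) Bhattacharyya-parameter analysis of message-passing over binary-input symmetric channels \cite{mct,Lentermaier}, the only addition being to carry an edge-type index through the bookkeeping. I would split one iteration of protograph density evolution into a check-node half-step and a variable-node half-step. Let $\tilde{B}_{t+1}(j)$ denote the Bhattacharyya parameter of the check-to-bit message density on edge $e_j$ in iteration $t+1$; this density is the $\tanh$-rule combination of the bit-to-check densities $p_t^{(i')}$, $i'\in E_c(e_j)$, which are mutually independent in the density-evolution (tree) limit. The target inequality will follow from two facts: (i) the Bhattacharyya parameter is \emph{multiplicative} under the variable-node combination, and (ii) it is \emph{subadditive} under the check-node combination.

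For (i), the bit-to-check message on $e_i$ is the LLR of the channel output together with the check-to-bit messages arriving on the edges $j\in E_v(e_i)$, all conditionally independent given the transmitted bit; the joint conditional density therefore factorizes, and a one-line calculation gives that the Bhattacharyya parameter of the combined observation is the product of the individual ones, i.e.
\[
 B_{t+1}(i)=B_0\prod_{j\in E_v(e_i)}\tilde{B}_{t+1}(j).
\]
For (ii), it suffices, by induction on $|E_c(e_j)|$, to treat the two-input case: a symmetric channel formed by estimating $X_1\oplus X_2$ from independent observations with densities $p_1,p_2$ and Bhattacharyya parameters $B_1,B_2$ has Bhattacharyya parameter at most $B_1+B_2$. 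Writing $a_k=p_k(y_k|{+}1)$ and $b_k=p_k(y_k|{-}1)$ and using channel symmetry, the combined channel has conditional densities $\tfrac12(a_1a_2+b_1b_2)$ and $\tfrac12(a_1b_2+b_1a_2)$, so its Bhattacharyya parameter is
\[
\tfrac12\iint\sqrt{\,a_2b_2(a_1^2+b_1^2)+a_1b_1(a_2^2+b_2^2)\,}\;dy_1\,dy_2,
\]
and then $\sqrt{u+v}\le\sqrt u+\sqrt v$, followed by $\sqrt{a^2+b^2}\le a+b$ and $\int(a_k+b_k)\,dy_k=2$, yields exactly $B_1+B_2$. (On the BEC this specializes to the identity $\epsilon_1+\epsilon_2-\epsilon_1\epsilon_2\le\epsilon_1+\epsilon_2$.) Hence $\tilde{B}_{t+1}(j)\le\sum_{i'\in E_c(e_j)}B_t(i')$.

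Combining, $B_{t+1}(i)=B_0\prod_{j\in E_v(e_i)}\tilde{B}_{t+1}(j)\le B_0\prod_{j\in E_v(e_i)}\sum_{i'\in E_c(e_j)}B_t(i')$, which is the assertion of the lemma. The routine parts are the multiplicativity step and the induction reducing the check-node bound to two inputs. The step I expect to demand the most care is the two-input check-node subadditivity: getting the product of the two conditional densities into the form displayed above (this is where the $\tanh$ rule, rather than a plain sum, enters, and why the final relation is an inequality rather than an equality), and then checking that the two elementary inequalities line up so the cross-terms integrate to $1$. I would also note explicitly that, as in every density-evolution argument, the factorization in step (i) relies on independence of the incoming messages, which is precisely the regime that the recursion of Lemma~\ref{lem:bhatt} is meant to describe.
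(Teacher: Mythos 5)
Your proof is correct, and it is essentially the argument the paper itself defers to: the paper proves this lemma only by pointing to Lemma~1 of the Lentmaier et al.\ reference, whose proof is exactly your decomposition into multiplicativity of the Bhattacharyya parameter at the variable node and subadditivity (via the two-input reduction and the elementary inequalities $\sqrt{u+v}\le\sqrt{u}+\sqrt{v}$ and $\sqrt{a^2+b^2}\le a+b$) at the check node. So you have simply written out in full the standard proof the paper omits, with the edge-type bookkeeping correctly carried through.
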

\begin{proof}
The proof follows the proof of Lemma 1 in \cite{Lentermaier} closely, and we
skip the details.
\end{proof}
Let $\ch(\sigma)$ be a family of binary-input symmetric output
channels, where $\sigma$ denotes the channel parameter with
$\ch(\sigma)$ being a degraded version of $\ch(\sigma')$ whenever $\sigma>\sigma'$. 
Let $\sigma_{\text{th}}$ be the threshold below
which the maximum probability of error in protograph density evolution
for a protograph $G$ tends to zero as $t\to\infty$. Since
probability of error tending to zero implies that Bhattacharya
parameter tends to zero, we have that, for $\sigma<\sigma_{\text{th}}$, the maximum Bhattacharya parameter $\max_iB_t(i)\to0$ as $t\to\infty$.

Now, using ideas similar to those used for the binary erasure channel, we
can show that the Bhattacharya parameter, and, hence, the
probability of message error, exhibits a double exponential fall with
iterations if the degree-2 subgraph of $G$ is cycle-free. This result
is stated as a theorem for reference.
\begin{theorem}
Let a protograph $G$ be such that (1) there are no loops involving
only degree-2 variable nodes, and (2) every degree-2 variable node is
connected to a variable node of degree at least 3. Then, for
$\sigma<\sigma_{\text{th}}$, 
\begin{equation}
  \label{eq:31a}
  B_t(i)=O(\exp(-\beta 2^{\alpha t}))
\end{equation}
for sufficiently large $t$, where $\alpha$, $\beta$ are positive constants.
\label{thm:double-expon-fall-bms}
\end{theorem}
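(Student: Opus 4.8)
The plan is to mirror the proof of Theorem~\ref{thm:double-expon-fall} almost verbatim, with the recursion \eqref{eq:1}--\eqref{eq:2} replaced by the Bhattacharyya inequality of Lemma~\ref{lem:bhatt}. Let $\bar{B}_t=\max_i B_t(i)$ and let $|v_2|$ be the number of degree-2 variable nodes in $G$. Exactly as before, it suffices to exhibit a positive integer $R$ and a constant $A$, independent of $t$, with $A\bar{B}_R<1$ such that $\bar{B}_{t+|v_2|+1}\le A(\bar{B}_t)^2$ for all $t\ge R$; iterating this bound yields \eqref{eq:31a}. To get there I would carry an inductive hypothesis of the form $B_{t+l}(i)\le C_l\,\bar{B}_t^{\,m(l,i)}$ for $l\in\{0,1,\dots,|v_2|\}$, with $m(l,i)\in\{1,2\}$, $C_0=1$, $m(0,i)=1$, and show $m(|v_2|+1,i)=2$ for every $i$.

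The propagation step is where the Bhattacharyya recursion does the work. Fix $i$ and apply Lemma~\ref{lem:bhatt}: $B_{t+l+1}(i)\le B_0\prod_{j\in E_v(e_i)}\sum_{i'\in E_c(e_j)}B_{t+l}(i')$. For each $j\in E_v(e_i)$, bound the inner sum by $\sum_{i'\in E_c(e_j)}C_l\bar{B}_t^{\,m(l,i')}\le |E_c(e_j)|\,C_l\,\bar{B}_t^{\,n(l,j)}$, where $n(l,j)=\min_{i'\in E_c(e_j)}m(l,i')$ as in \eqref{eq:34} and I used $\bar{B}_t\le 1$ for large $t$. Since $|E_c(e_j)|\le r_{\max}-1$, multiplying over the $l(i)$ edges $j\in E_v(e_i)$ gives $B_{t+l+1}(i)\le B_0\big((r_{\max}-1)C_l\big)^{l(i)}\bar{B}_t^{\sum_{j\in E_v(e_i)}n(l,j)}$, so we may take $C_{l+1}=B_0\max_i\big((r_{\max}-1)C_l\big)^{l(i)}$ and set $m(l+1,i)=1$ if $\sum_{j\in E_v(e_i)}n(l,j)=1$ and $m(l+1,i)=2$ otherwise, just as in \eqref{eq:33}. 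As in the BEC case, $\sum_{j\in E_v(e_i)}n(l,j)=1$ forces $v(e_i)$ to be a degree-2 node with $n(l,j)=1$ on its single other edge, so a walk of length $|v_2|+1$ through degree-2 variable nodes would be needed for $m(|v_2|+1,i)=1$; conditions (1) and (2) of the theorem rule this out, giving the contradiction and hence $m(|v_2|+1,i)=2$.

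Finally, one must check the base-case constant condition $A\bar{B}_R<1$. Here $A$ is the analogue of the BEC constant, roughly $C_{|v_2|+1}$; since $B_t(i)$ bounds the message-error probability and $\sigma<\sigma_{\text{th}}$, we have $\bar{B}_t\to 0$, so one can choose $R$ large enough that $\bar{B}_R<1/A$ and simultaneously large enough that all the "for large $t$" steps above (needing $\bar{B}_t\le 1$, hence $C_l\bar{B}_t^{n(l,j)}\le 1$ where required) hold for $t\ge R$; the convergence $\bar{B}_t\to 0$ for $\sigma<\sigma_{\text{th}}$ was established just before the statement. The only genuine point of care, and the step I expect to be the main obstacle, is that Lemma~\ref{lem:bhatt} gives a product of \emph{sums} $\sum_{i'\in E_c(e_j)}B_t(i')$ rather than the $1-\prod(1-x_t(i'))$ of the BEC; this costs the factor $|E_c(e_j)|$ instead of $|E_c(e_j)|-1$ and means we do not get the clean inequality \eqref{eq:3}, but it is harmless because it only inflates the constant $C_{l+1}$, which is still $t$-independent, and the squaring-at-regular-intervals combinatorial argument — the actual heart of the double-exponential fall — is identical. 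I would therefore state that the proof is obtained from that of Theorem~\ref{thm:double-expon-fall} by the substitutions above and only spell out the propagation inequality.
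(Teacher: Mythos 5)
Your proposal is correct and is exactly what the paper intends: the paper's own proof of this theorem is a one-line deferral to the proof of Theorem~\ref{thm:double-expon-fall}, and your propagation of the bounds $B_{t+l}(i)\le C_l\bar{B}_t^{m(l,i)}$ through Lemma~\ref{lem:bhatt}, together with the unchanged degree-2-walk contradiction, is the intended instantiation. The only blemishes are immaterial constant miscounts (note $|E_c(e_j)|=r(j)-1$, so the sum actually costs the \emph{same} factor as the BEC bound \eqref{eq:4}, and $|E_v(e_i)|=l(i)-1$, not $l(i)$), which affect only $C_{l+1}$ and not the exponents $m(l,i)$ that drive the double-exponential fall.
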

\begin{proof}
The proof is similar to the proof of Theorem \ref{thm:double-expon-fall}.
\end{proof}
The statements about large-girth constructions in
Section \ref{sec:large-girth-lifted} for binary erasure channels carry over for
the binary-input symmetric channel case as well. In particular, a
sequence of large-girth protograph LDPC codes over binary-input
symmetric channels will have bit-error
threshold equal to block-error threshold, and block-error rate falling
near-exponentially with blocklength for noise levels below threshold,
whenever the degree-2 subgraph of the protograph is cycle-free.
\section{Large-Girth Protograph LDPC Codes}
\label{sec:large-girth-prot}
We have seen that a sequence (in $n$) of length-$n$ protograph LDPC codes with
girth increasing as $c\log n$, $c>0$, results in block-error rate
falling as $O\left(n\exp(-\beta n^{c\alpha})\right)$ (where
$\alpha,\beta>0$) below the message
or bit-error threshold of the protograph. 

The construction of large-girth regular graphs is a classic problem in
graph theory \cite{B.Bollobas}. For a recent construction and survey
of latest results, see
\cite{Dahan14}. For applications of large-girth graphs in the construction
of LDPC codes, see \cite{margulis82}, \cite{Rosenthal},
\cite{arunForensics}.  In this section, we show how
sequences of large-girth protograph LDPC codes can be constructed
starting from sequences of regular large-girth graphs. We also discuss
explicit deterministic constructions. Parts of this
construction were presented earlier in \cite{6620513}. 

\subsection{Construction of large-girth protograph LDPC codes}
Let $G=(V\cup C,E)$ be a protograph. The starting point for the
construction is a sequence of $|E|$-regular bipartite graphs $B_{n_i}=(V_{i}\cup C_{i},E_i)$,
$i=1,2,\ldots$, with $|V_i|=|C_i|=n_i$. The
existence of such sequences is well-known in graph theory, and we
provide explicit examples later on in this section. For now, we assume
that such a sequence is available.
\subsubsection{Edge coloring}
According to K\"onig's theorem \cite{bondy1976graph}, the graph $B_{n_i}=(V_{i}\cup C_{i},E_i)$ can be
edge-colored with $|E|$ colors numbered from 1 to $|E|$. We fix such a coloring. For a vertex
 $v\in V_{i}$, let $e_{j}(v)$, $j=1,2,\ldots,|E|$, denote the edge of
 color $j$ incident on $v$. Similarly, let $e_{j}(c)$ denote the edge
 of color $j$ incident on $c\in C_{i}$. 
\subsubsection{Node splitting}
Let us number the left vertices of the protograph $G$ as
$1,2,\ldots,|V|$, the right vertices as $1,2,\ldots,|C|$, and the
edges as $1,2,\ldots,|E|$. Let
$l(j)$ and $r(j)$ denote the left and right vertex indices of $G$ connected by the
edge $j$. 

From the graph $B_{n_i}$, we will construct a bipartite graph
$G'=(V'\cup C',E')$, where $|V'|=n_i|V|$, $|C'|=n_i|C|$ and
$|E'|=|E_i|=n_i|E|$, by operations that we call node splitting followed by
edge reconnecting. Every vertex $v\in V_{i}$ is split into 
$|V|$ vertices and denoted, say, as $v_1,v_2,\ldots,v_{|V|}\in V'$. Every vertex $c\in C_{i}$ is split into
$|C|$ vertices denoted $c_1,c_2,\ldots,c_{|C|}\in C'$. Now, we connect
the edge $e_j(v)$ originally incident on $v\in V_i$ to the new vertex
$v_{l(j)}\in V'$. Similarly, we connect the edge $e_j(c)$ incident on
$c\in C_i$ to the new vertex $c_{r(j)}\in C'$. 

The node splitting step is illustrated in Fig. \ref{fig:nodesplit}.
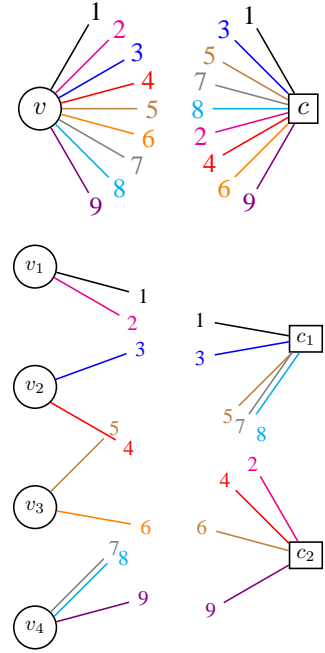
\begin{figure}[htb]
\begin{center}
\begin{tikzpicture}
\begin{scope}[node distance=2cm,>=angle 90,semithick]
\node[cnode] (v) {$v$};
\foreach \x/\c/\n in {60/black/1,45/magenta/2,30/blue/3,15/red/4,0/brown/5,345/orange/6,330/gray/7,315/cyan/8,300/violet/9}
{
\draw[\c] (v.\x) -- +(\x:1cm) +(\x:1.2cm) node{\n};
}
\node[conode] (chk) [right of=v,xshift=1.5cm]{$c$};
\foreach \x/\c/\n in {120/black/1,135/blue/3,150/brown/5,165/gray/7,180/cyan/8,195/magenta/2,210/red/4,225/orange/6,240/violet/9}
{
\draw[\c] (chk.\x) -- +(\x:1cm) +(\x:1.2cm) node{\n};
}
\node [below of=v,xshift=1.7cm,yshift=-2.5cm]{
\begin{tikzpicture}[every node/.style={scale=0.8}]
\begin{scope}[node distance=2cm,>=angle 90,semithick]
\node[cnode] (v1) {$v_1$};
\node[crnode] (v2)[below of=v1] {$v_2$};
\node[cgnode] (v3)[below of=v2] {$v_3$};
\node[cpnode] (v4)[below of=v3] {$v_4$};
\node[conode] (c1)[right of=v2,xshift=2.5cm,yshift=0.8cm] {$c_1$};
\node[conode] (c2)[right of=v3,xshift=2.5cm,yshift=-0.8cm] {$c_2$};
\draw[black] (v1.345) -- +(345:1cm) +(345:1.2cm) node{1};
\draw[magenta] (v1.330) -- +(330:1cm)  +(330:1.2cm) node{2};

\draw[blue] (v2.20) -- +(20:1cm)  +(20:1.2cm) node{3};
\draw[red] (v2.315) -- +(330:1cm)  +(330:1.2cm) node{4};

\draw[brown] (v3.45) -- +(45:1cm)  +(45:1.2cm) node{5};
\draw[orange] (v3.350) -- +(350:1cm)  +(350:1.2cm) node{6};

\draw[gray] (v4.45) -- +(45:1cm)  +(45:1.2cm) node{7};
\draw[cyan] (v4.30) -- +(45:1cm)  +(40:1.2cm) node{8};
\draw[violet] (v4.15) -- +(15:1cm)  +(15:1.2cm) node{9};

\draw[black] (c1.170) -- +(170:1cm)  +(170:1.2cm) node{1};
\draw[blue] (c1.190) -- +(190:1cm)  +(190:1.2cm) node{3};
\draw[brown] (c1.225) -- +(225:1cm)  +(225:1.2cm) node{5};
\draw[gray] (c1.225) -- +(235:1cm)  +(235:1.2cm) node{7};
\draw[cyan] (c1.245) -- +(235:1cm)  +(245:1.2cm) node{8};

\draw[magenta] (c2.120) -- +(120:1cm)  +(120:1.2cm) node{2};
\draw[red] (c2.145) -- +(135:1cm)  +(135:1.2cm) node{4};
\draw[brown] (c2.165) -- +(165:1cm)  +(165:1.2cm) node{6};
\draw[violet] (c2.210) -- +(210:1cm)  +(210:1.2cm) node{9};
\end{scope}    
\end{tikzpicture}
};
\end{scope}    
\end{tikzpicture}
\end{center}
\caption{Illustration of node splitting with the protograph of Fig. \ref{fig:protex}.}
\label{fig:nodesplit} 
\end{figure}
\subsubsection{Properties}
Two main properties are quite easy to prove. The first is that the
graph $G'$ obtained by node splitting is a lifted version of the
protograph $G$, and can be generated by a copy-permute operation on
 $G$. In the copy-permute operation, the protograph $G$ is copied $n_i$ times, and the
permutation of the edge type $j$ in $G$ is determined precisely by the
matching $M=\{e_j(v): v\in V_i\}$ in $B_{n_i}$. Numbering the
left/right vertices of $B_{n_i}$ from 1 to $n_i$, let $M$ map the
left vertex $t$ to the right vertex $M(t)$ in $B_{n_i}$. In the $t$-th copy of the
protograph $G$, the edge $(e,t)$ connecting $(v,t)$ to $(c,t)$ is
permuted to connect $(v,t)$ to $(c,M(t))$ in the lifted graph.

The second property is that the girth of $G'$ is at least as large as the girth
of $B_{n_i}$. This is easy to see because a cycle in $G'$ readily
maps to a cycle of the same length in $B_{n_i}$. 

In summary, we see that, given a sequence of $|E|$-regular bipartite
graphs $B_{n_i}$ with $n_i$ nodes in each bipartition and
girth at least $c\log n_i$, we can construct a sequence of liftings of
a protograph with $|E|$ edges such that the girth of the lifted graphs
grows at least as $c\log n_i$.   

\subsection{Deterministic constructions}
The construction method described above can
use any sequence of regular large-girth graphs. For completeness and
to give deterministic constructions, we describe the parameters of
two large-girth graph sequences called 
LPS graphs \cite{LPS} and $D(m,q)$ graphs \cite{Ustimenko}, which we have used in simulations.

\subsubsection{ LPS Graphs $X^{p,q}$} 
Let $p$ and $q$ be distinct, odd primes with $q>2\sqrt{p}$. The LPS
graph, denoted $X^{p,q}$ \cite{LPS}, is a connected, $(p+1)$-regular graph and has
the following properties:
\begin{itemize}
 \item If $p$ is a quadratic residue $\mod q$, then $X^{p,q}$ is a
   non-bipartite graph with $q(q^2-1)/2$ vertices and girth
   $g(X^{p,q}) \geq 2\log_p q$.
\item If $p$ is a quadratic non-residue $\mod q$, then $X^{p,q}$ is a
  bipartite graph with $q(q^2-1)$ vertices and girth
  $g(X^{p,q}) \geq 4\log_p q-\log_p 4$.
\end{itemize}
When $X^{p,q}$ is non-bipartite, we can convert it to a bipartite
graph using the following algorithm \cite{B.Bollobas}\cite{arunForensics}:
\begin{itemize}
 \item Given a graph $G$ with vertices $V(G)$ and edges $E(G)$, construct a copy
   $G^\prime$ with a new vertex set $V(G')$ and a new edge set $E(G')$. Let $f:V(G)\rightarrow
   V(G^\prime)$ be the 1-1 mapping from a vertex in $G$ to its copy in $G'$.
\item Create a bipartite graph $H$ with vertex set $V(G)\cup V(G^\prime)$ and
  edge set $E(H)=\{(x,f(y)):(x,y)\in E(G)\}$.
\end{itemize}
Following \cite{B.Bollobas}, it was shown in \cite{arunForensics} that $g(H)\geq g(G)$.
For constructing a sequence of $d$-regular large-girth graphs for an arbitrary $d$ using the LPS graphs, we use the following trick from \cite{arunForensics}. 
There exists an infinite number of primes $p$ such that $d$ divides $(p+1)$, i.e., $d|(p+1)$. For each such prime $p$ and a suitable $q$, we construct $X^{p,q}$ and split each $(p+1)$-degree node into 
$(p+1)/d$ nodes of degree $d$. As shown in \cite{arunForensics}, node splitting does not reduce girth and we have a large-girth graph of the required degree $d$.

\subsubsection{$D(m,q)$ graph}
The $D(m,q)$ graphs satisfy the following properties (\cite{Ustimenko}
and \cite{Viglione}):
\begin{enumerate}[(a)]
\item For a prime power $q$ and an integer $m\geq 2$, the girth of
  $D(m,q)$ satisfies
  \begin{equation}
    \label{eq:24}
g(D(m,q) \geq \begin{cases} m+5,\  m \text{ odd},  \\
 m+4,\ m \text{ even.}
\end{cases}      
  \end{equation}
\item For $q \geq 5$ and $2 \leq m \leq 5$, $D(m,q)$ is a connected bipartite graph with $2q^{m}$ vertices.
\item For $m \geq 6$, the graph $D(m,q)$ is disconnected. Because of
  edge transitivity all connected components are isomorphic. There are
  $q^{t-1}$ components of $D(m,q)$, where $t=\left\lfloor \frac{m+2}{4}\right\rfloor $.
Each component of $D(m,q)$ has $2q^{m-t+1}$ vertices and has girth
equal to $g(D(m,q))$ defined in \eqref{eq:24}.
Thus, for $m\ge 6 $, any connected component of
$D(m,q)$ can be used for constructing LDPC codes.
\end{enumerate}

\subsubsection{Comparison between $X^{p,q}$ and $D(m,q)$}
In the LPS construction $X^{p,q}$, to guarantee a minimum girth $g$, a
careful calculation shows that we must have
blocklength $n\sim p^{\frac{3g}{2}}$ or $n\sim p^{\frac{3g}{4}}$.

For $m \geq 6$, to guarantee girth $g$ in the $D(m,q)$ construction,
the blocklength grows as $n \sim q^{\frac{3g-13}{4}}$, which is smaller than that of $X^{p,q}$.
Hence, we can generate graphs of smaller block length by using the
$D(m,q)$ graph with the node-splitting algorithm. But unlike $X^{p,q}$, in $D(m,q)$, the vertex degree is always a power
of a prime, which implies that the number of edge types in the
protograph needs to be a prime power. The constructions in
\cite{Dahan14} work directly for an arbitrary degree, and could be
used as well.

\section{Optimization of Protographs}
\label{sec:optim-prot}
In this section, we describe the search procedure used for generating
optimized protographs. The erasure channel version was partly presented in
\cite{6620513}.
\subsection{Differential evolution}
We have optimized protographs using differential evolution \cite{DE}\cite{Storn}, where
we use the threshold given by density evolution as the cost function. The
salient steps of the differential evolution algorithm are described briefly in the following:
\begin{itemize}
 \item[1.] Initialization: For generation $G=0$, we randomly choose
   $N_P$ base matrices $B_{k,G}$, with $0\le k\le N_P-1$, of size $|C|\times|V|$, where
   $N_P=10|C||V|$. Each entry of $B_{k,G}$ is binary, chosen
   independently and uniformly. 
\item[2.] Mutation: Protographs of a particular generation are interpolated as follows.
\begin{align}
 M_{k,G}=[B_{r_1,G}+0.5(B_{r_2,G}-B_{r_3,G})],
\end{align}
where $r_1$, $r_2$, $r_3$ are randomly-chosen distinct values in the
range $[0,N_P-1]$, and $[x]$ denotes the absolute value of $x$ rounded to
the nearest integer.
\item[3.] Crossover: A candidate protograph $B'_{k,G}$ is chosen as
  follows. The $(i,j)$-th entry of $B'_{k,G}$ is set as the $(i,j)$-th
  entry of $M_{k,G}$ with probability $p_c$, or as the $(i,j)$-th
  entry of $B_{k,G}$ with probability $1-p_c$. We use $p_c=0.88$ in
  our optimization runs. In $B'_{k,G}$, if any cycle of degree-2 nodes
  emerges, edges are reassigned. 
\item[4.] Selection: For generation $G+1$, protographs are selected as
  follows. If the threshold of $B_{k,G}$ is greater than that of
  $B'_{k,G}$, set $B_{k,G+1}=B_{k,G}$; else, set $B_{k,G+1}=B'_{k,G}$.
\item[5.] Termination: Steps 2--4 are run for several generations (we
  run up to $G=6000$) and the protograph that gives the best threshold is chosen as the optimized protograph.
\end{itemize}
In the crossover step, we ensure that the subgraph induced by the degree-2
nodes of the protograph is a tree. This ensures that the block-error
threshold equals the bit-error threshold. If this condition is not
enforced, better thresholds might result from the optimization, but
with no guarantee of a block-error threshold.

The value of $p_c$ is the crossover step has been taken as 0.88 based
on trial and error. The optimization can be run with other values of
$p_c$, but we have obtained acceptable results with this value.
\subsection{Optimized protographs for BEC}
\label{subsec:BEC_opt_prot}
A few optimized protographs obtained from the above optimization
process are as follows. An optimized $3\times 12$, rate-3/4 protograph
with threshold 0.238 is given by the following base matrix:
 \begin{equation}
  \label{th=0.238}
\begin{bmatrix}
  1 & 1 & 0 & 0 & 7 & 4 & 1 & 1 & 0 & 0 & 0 & 0\\
 1 & 2 & 3 & 0 & 7 & 1 & 0 & 0 & 3 & 1 & 0 & 0\\
 1 & 5 & 5 & 3 & 4 & 0 & 1 & 2 & 0 & 1 & 3 & 3
\end{bmatrix}
\end{equation}
 An optimized $4\times 12$, rate-2/3 protograph
with threshold 0.32 is given by the following base matrix:
 \begin{equation}
  \label{th=0.32}
\begin{bmatrix}
 1 & 1 & 1 & 5 & 3 & 1 & 0 & 2 & 3 & 1 & 1 & 1\\
 0 & 1 & 0 & 6 & 0 & 0 & 0 & 2 & 0 & 1 & 1 & 1\\
 0 & 0 & 2 & 6 & 0 & 0 & 1 & 1 & 0 & 1 & 0 & 0\\
 2 & 0 & 1 & 2 & 0 & 1 & 2 & 4 & 0 & 4 & 1 & 1
\end{bmatrix}
\end{equation}
An optimized $4\times 8$, rate-1/2 protograph
with threshold 0.479 is given by the following base matrix:
\begin{equation}
  \label{eq:8}
\begin{bmatrix}
1 & 2 & 2 & 3 & 4 & 1 & 1 & 0  \\
0 & 1 & 0 & 0 & 5 & 0 & 0 & 1 \\
1 & 0 & 0 & 0 & 3 & 0 & 4 & 1\\
1 & 0 & 1 & 0 & 6 & 1 & 0 & 0 
\end{bmatrix}
\end{equation}
We observe that high thresholds are obtained even with small-sized
protographs. As the size increases, the thresholds get close to
capacity bounds.

An optimized $8\times 16$ protograph with threshold 0.486 is given by the following base matrix:
\begin{equation}
  \label{eq:7}
\begin{bmatrix}
1 & 2 & 0 & 0 & 1 & 0 & 0 & 4 & 0 & 0 & 0 & 0 & 0 & 0 & 0 & 1\\
0 & 1 & 0 & 0 & 0 & 1 & 0 & 0 & 2 & 2 & 1 & 0 & 0 & 0 & 1 & 1\\
0 & 3 & 1 & 2 & 1 & 0 & 0 & 0 & 4 & 0 & 0 & 3 & 2 & 2 & 0 & 3\\
0 & 5 & 0 & 0 & 0 & 0 & 1 & 1 & 0 & 0 & 1 & 0 & 0 & 1 & 0 & 0\\
1 & 3 & 1 & 1 & 1 & 2 & 0 & 0 & 1 & 0 & 0 & 0 & 0 & 0 & 0 & 0\\
1 & 5 & 0 & 0 & 0 & 3 & 1 & 0 & 0 & 0 & 1 & 0 & 0 & 0 & 0 & 0\\
0 & 4 & 0 & 0 & 0 & 0 & 0 & 1 & 1 & 0 & 0 & 0 & 0 & 0 & 0 & 1\\
0 & 5 & 0 & 0 & 0 & 0 & 0 & 0 & 0 & 1 & 0 & 0 & 1 & 0 & 1 & 0
\end{bmatrix}  
\end{equation}
A $16\times32$ protograph with threshold 0.4952 is given in \eqref{th=0.495}.
\begin{figure*}
\begin{equation}
\label{th=0.495}
\left[
\begin{array}{*{32}c}
 3 & 1 & 0 & 0 & 0 & 0 & 0 & 0 & 0 & 0 & 0 & 0 & 0 & 0 & 0 & 2 & 1 & 0 & 1 & 0 & 0 & 0 & 0 & 0 & 0 & 1 & 0 & 0 & 0 & 0 & 0 & 0\\
 4 & 0 & 2 & 0 & 0 & 0 & 0 & 0 & 0 & 1 & 0 & 1 & 0 & 0 & 0 & 0 & 0 & 0 & 0 & 0 & 0 & 0 & 0 & 0 & 0 & 0 & 0 & 0 & 0 & 0 & 0 & 0\\
 1 & 0 & 0 & 0 & 1 & 1 & 1 & 1 & 0 & 0 & 0 & 0 & 1 & 1 & 0 & 1 & 0 & 0 & 1 & 0 & 0 & 0 & 0 & 0 & 1 & 0 & 0 & 0 & 0 & 1 & 2 & 0\\
 4 & 0 & 0 & 0 & 1 & 1 & 0 & 1 & 0 & 0 & 0 & 0 & 0 & 0 & 0 & 0 & 0 & 0 & 0 & 0 & 0 & 0 & 0 & 0 & 0 & 0 & 0 & 0 & 0 & 0 & 1 & 0\\
 2 & 1 & 1 & 1 & 1 & 0 & 0 & 0 & 0 & 0 & 0 & 0 & 0 & 1 & 0 & 0 & 0 & 2 & 0 & 0 & 0 & 0 & 0 & 0 & 0 & 1 & 0 & 0 & 0 & 0 & 0 & 0\\
 2 & 0 & 0 & 0 & 0 & 1 & 1 & 2 & 1 & 0 & 0 & 0 & 1 & 0 & 0 & 0 & 0 & 0 & 1 & 1 & 0 & 0 & 0 & 0 & 0 & 0 & 0 & 1 & 1 & 1 & 0 & 0\\
 4 & 2 & 0 & 0 & 0 & 0 & 0 & 0 & 0 & 0 & 0 & 0 & 0 & 0 & 0 & 0 & 0 & 0 & 0 & 0 & 0 & 0 & 0 & 0 & 0 & 0 & 1 & 0 & 0 & 0 & 0 & 0\\
 4 & 1 & 1 & 1 & 0 & 0 & 0 & 0 & 0 & 0 & 2 & 0 & 0 & 1 & 0 & 0 & 0 & 0 & 0 & 0 & 0 & 0 & 0 & 1 & 0 & 0 & 0 & 0 & 0 & 0 & 0 & 0\\
 4 & 0 & 1 & 0 & 1 & 0 & 1 & 0 & 0 & 0 & 0 & 0 & 0 & 0 & 0 & 1 & 1 & 0 & 0 & 0 & 0 & 1 & 1 & 0 & 0 & 0 & 0 & 0 & 1 & 0 & 0 & 0\\
 4 & 0 & 0 & 0 & 1 & 0 & 0 & 2 & 0 & 1 & 0 & 0 & 1 & 0 & 0 & 0 & 0 & 0 & 0 & 0 & 0 & 0 & 0 & 1 & 0 & 2 & 0 & 0 & 0 & 0 & 0 & 1\\
 0 & 0 & 0 & 0 & 0 & 0 & 0 & 1 & 0 & 0 & 0 & 0 & 0 & 3 & 0 & 0 & 0 & 0 & 1 & 0 & 1 & 0 & 0 & 0 & 0 & 1 & 0 & 0 & 0 & 0 & 1 & 0\\
 2 & 0 & 0 & 0 & 0 & 1 & 0 & 0 & 0 & 0 & 0 & 1 & 2 & 1 & 1 & 0 & 0 & 0 & 0 & 0 & 0 & 0 & 0 & 0 & 0 & 0 & 0 & 1 & 0 & 1 & 0 & 0\\
 1 & 0 & 1 & 0 & 1 & 1 & 0 & 1 & 1 & 0 & 0 & 0 & 2 & 0 & 2 & 0 & 0 & 0 & 0 & 1 & 1 & 0 & 1 & 0 & 1 & 1 & 0 & 0 & 1 & 0 & 3 & 1\\
 4 & 0 & 0 & 0 & 0 & 0 & 0 & 0 & 0 & 0 & 0 & 1 & 0 & 2 & 0 & 0 & 0 & 0 & 2 & 0 & 0 & 0 & 0 & 0 & 0 & 0 & 0 & 0 & 0 & 0 & 0 & 0\\
 4 & 0 & 1 & 0 & 0 & 0 & 0 & 0 & 0 & 0 & 0 & 1 & 1 & 1 & 0 & 0 & 0 & 1 & 1 & 0 & 0 & 0 & 0 & 0 & 0 & 0 & 0 & 0 & 0 & 1 & 0 & 0\\
 3 & 0 & 2 & 0 & 0 & 0 & 0 & 0 & 0 & 0 & 1 & 0 & 0 & 1 & 0 & 2 & 0 & 0 & 0 & 0 & 0 & 1 & 0 & 0 & 0 & 0 & 1 & 0 & 0 & 0 & 0 & 0
\end{array}
\right]\end{equation}
\end{figure*}
The above protographs from our optimization runs are compared against other protographs in Table \ref{tab:optprot_BEC}.
We see that the optimized protographs give better thresholds than
irregular standard ensemble codes with minimum degree 3
\cite{arunForensics} and other construction such as AR4JA
\cite[Figure 7]{4036046}, and standards such as WIMAX \cite{6272299}
and DVB-S2 \cite{dvbs2}.
\begin{table}[htb]
  \centering
  \begin{tabular}{|c|c|c|c|c|}
    \hline
    {\bf Code type}&{\bf Rate}&{\bf Size}&{\bf Threshold}&{\bf Gap}\\
   \hline
   WIMAX & 0.5 & $12 \times 24$& 0.448&0.052\\
   \hline
   DVB-S2 & 0.444 & $25 \times 45$& 0.516&0.040\\
    \hline
    Standard ($l_{\min}=3$)&0.5&Not applicable&0.461&0.039\\
   \hline
   AR4JA & 0.5 & $4 \times 8$& 0.468&0.032\\
    \hline
     protograph in \eqref{eq:8}&0.5&$4\times 8$&0.479&0.021\\
    \hline
     protograph in \eqref{eq:7}&0.5&$8\times 16$&0.486&0.014\\
    \hline
   protograph in \eqref{th=0.495}&0.5&$16 \times 32$&0.4953&0.0047\\
   \hline
   \hline
   AR4JA & 0.67 & $2 \times 6$& 0.291&0.039\\
   \hline
   WIMAX & 0.67 & $8 \times 24$& 0.292&0.038\\
   \hline
   DVB-S2 & 0.67 & $15 \times 45$& 0.305&0.028\\
  \hline
   protograph in \eqref{th=0.32}&0.67&$4\times 12$&0.32&0.01\\
   \hline
   \hline
   WIMAX & 0.75 & $6 \times 24$& 0.212&0.038\\
   \hline
   DVB-S2 & 0.73 & $12 \times 45$& 0.232&0.018\\
   \hline
   protograph in \eqref{th=0.238}&0.75&$3 \times 12$&0.238&0.012\\
    \hline
  \end{tabular}
  \caption{Comparison of protograph thresholds for BEC.}
  \label{tab:optprot_BEC}
\end{table}
\subsection{Optimized protographs for BIAWGN channel}
For BIAWGN channel, the threshold of protograph density evolution is
computed using the EXIT chart method described in \cite{Liva}. A few
optimized protographs are given below, and their SNR thresholds
(denoted $\text{SNR}_{\text{th}}$) are
compared against the capacity-achieving SNR (denoted
$\text{SNR}_{\text{cap}}$) and other protographs such as 
AR4JA \cite[Figure 7]{4036046}, and those from the DVB-S2 and WIMAX standards in Table \ref{tab:optprot_AWGN}.
\begin{equation}
  \label{th=1.18}
\begin{bmatrix}
 2 & 0 & 0 & 1 & 7 & 0 & 1 & 0 & 0 & 0 & 1 & 1\\
 0 & 0 & 1 & 1 & 7 & 0 & 0 & 1 & 2 & 0 & 0 & 1\\
 4 & 1 & 1 & 1 & 5 & 0 & 1 & 0 & 0 & 0 & 1 & 3\\
 5 & 1 & 1 & 6 & 1 & 3 & 1 & 1 & 1 & 3 & 0 & 1
\end{bmatrix}
\end{equation}
\begin{equation}
  \label{th=1.79}
\begin{bmatrix}
 0 & 0 & 7 & 0 & 2 & 1 & 0 & 1 & 0 & 2 & 3 & 0\\
 2 & 3 & 7 & 2 & 2 & 3 & 1 & 3 & 2 & 3 & 5 & 3\\
 1 & 0 & 8 & 1 & 0 & 2 & 1 & 1 & 1 & 4 & 0 & 0
\end{bmatrix}
\end{equation}
\begin{figure*}
\begin{equation}
\left[
\label{th=0.3}
\begin{array}{*{32}c}
 0 & 1 & 0 & 0 & 1 & 0 & 2 & 1 & 0 & 0 & 0 & 0 & 2 & 0 & 2 & 0 & 1 & 1 & 0 & 2 & 1 & 0 & 0 & 0 & 2 & 0 & 2 & 0 & 0 & 0 & 0 & 0\\
 0 & 0 & 0 & 1 & 0 & 0 & 0 & 0 & 0 & 0 & 0 & 0 & 0 & 0 & 0 & 1 & 0 & 4 & 0 & 1 & 0 & 0 & 0 & 0 & 0 & 1 & 0 & 0 & 0 & 0 & 0 & 0\\
 0 & 0 & 1 & 2 & 1 & 0 & 0 & 0 & 2 & 0 & 0 & 0 & 0 & 0 & 0 & 0 & 1 & 0 & 0 & 0 & 0 & 0 & 0 & 0 & 0 & 0 & 0 & 0 & 0 & 0 & 0 & 0\\
 0 & 0 & 0 & 1 & 1 & 0 & 2 & 0 & 1 & 0 & 0 & 0 & 1 & 0 & 1 & 0 & 0 & 0 & 0 & 1 & 0 & 0 & 0 & 0 & 1 & 0 & 0 & 0 & 0 & 0 & 2 & 2\\
 1 & 0 & 1 & 0 & 0 & 0 & 0 & 0 & 0 & 0 & 1 & 0 & 1 & 0 & 0 & 0 & 1 & 3 & 0 & 1 & 1 & 0 & 0 & 0 & 0 & 0 & 0 & 0 & 0 & 0 & 0 & 0\\
 0 & 0 & 1 & 1 & 0 & 0 & 0 & 0 & 1 & 2 & 0 & 0 & 0 & 0 & 0 & 0 & 0 & 3 & 1 & 0 & 0 & 1 & 0 & 0 & 0 & 1 & 0 & 0 & 1 & 1 & 1 & 0\\
 1 & 0 & 0 & 0 & 0 & 0 & 0 & 0 & 0 & 0 & 0 & 3 & 1 & 0 & 0 & 0 & 0 & 1 & 0 & 0 & 0 & 2 & 1 & 0 & 0 & 1 & 0 & 0 & 0 & 0 & 0 & 0\\
 0 & 0 & 2 & 0 & 0 & 0 & 0 & 1 & 0 & 0 & 0 & 0 & 0 & 0 & 0 & 0 & 1 & 4 & 0 & 0 & 0 & 0 & 0 & 0 & 0 & 0 & 0 & 0 & 0 & 0 & 0 & 0\\
 0 & 0 & 2 & 0 & 0 & 0 & 0 & 1 & 0 & 1 & 1 & 0 & 0 & 1 & 0 & 0 & 2 & 0 & 0 & 0 & 0 & 0 & 0 & 1 & 0 & 0 & 1 & 2 & 0 & 0 & 2 & 1\\
 0 & 0 & 2 & 1 & 0 & 1 & 0 & 1 & 0 & 0 & 0 & 0 & 0 & 1 & 0 & 1 & 0 & 2 & 0 & 0 & 0 & 0 & 0 & 0 & 0 & 0 & 0 & 0 & 0 & 0 & 0 & 0\\
 0 & 0 & 1 & 1 & 0 & 0 & 2 & 0 & 0 & 0 & 0 & 0 & 0 & 0 & 0 & 0 & 0 & 4 & 0 & 0 & 0 & 0 & 0 & 0 & 0 & 0 & 0 & 0 & 0 & 0 & 0 & 1\\
 0 & 0 & 0 & 0 & 0 & 0 & 0 & 1 & 0 & 0 & 0 & 0 & 1 & 0 & 0 & 0 & 0 & 4 & 0 & 1 & 0 & 0 & 0 & 2 & 0 & 0 & 0 & 0 & 0 & 0 & 0 & 1\\
 0 & 0 & 0 & 0 & 1 & 0 & 0 & 0 & 1 & 0 & 0 & 0 & 0 & 0 & 1 & 0 & 0 & 1 & 2 & 1 & 0 & 0 & 1 & 0 & 1 & 0 & 0 & 0 & 2 & 1 & 0 & 0\\
 0 & 0 & 2 & 0 & 0 & 1 & 2 & 0 & 0 & 0 & 0 & 0 & 0 & 0 & 2 & 0 & 0 & 1 & 0 & 0 & 0 & 0 & 0 & 1 & 0 & 0 & 0 & 0 & 0 & 0 & 0 & 0\\
 1 & 0 & 1 & 2 & 2 & 0 & 0 & 0 & 0 & 0 & 0 & 0 & 0 & 0 & 0 & 1 & 0 & 4 & 0 & 0 & 0 & 0 & 0 & 0 & 0 & 0 & 0 & 0 & 0 & 0 & 0 & 0\\
 0 & 1 & 0 & 0 & 0 & 0 & 0 & 1 & 1 & 0 & 0 & 0 & 1 & 0 & 0 & 0 & 0 & 1 & 2 & 0 & 0 & 0 & 0 & 0 & 0 & 0 & 0 & 1 & 0 & 0 & 0 & 2
\end{array}
\right]\end{equation}
\end{figure*}

\begin{table}[htb]
  \centering
  \begin{tabular}{|c|c|c|c|c|}
    \hline
    {\bf Code type}&{\bf Rate}&{\bf Size}&{$\text{SNR}_{\text{th}}$ (dB)}&{Gap (dB)}\\
   \hline
   DVB-S2 & 0.444 & $25 \times 45$& 0.474&1.042\\
   \hline
   WIMAX & 0.5 & $12 \times 24$& 0.812&0.625\\
   \hline
   AR4JA & 0.5 & $4 \times 8$& 0.496&0.309\\
    \hline
   protograph in \eqref{th=0.3}&0.5&$16 \times 32$&0.3&0.113\\
   \hline
\hline
   WIMAX & 0.67 & $8 \times 24$& 2.799&0.491\\
   \hline
   DVB-S2 & 0.67 & $15 \times 45$& 2.749&0.441\\
   \hline
   AR4JA & 0.67 & $2 \times 6$& 1.338&0.279\\
   \hline
   protograph in \eqref{th=1.18}&0.67&$4\times 12$&2.429&0.121\\
   \hline
\hline
   DVB-S2 & 0.73 & $12 \times 45$& 3.62&0.498\\
   \hline
   WIMAX & 0.75 & $6 \times 24$& 3.83&0.443\\
   \hline
   protograph in \eqref{th=1.79}&0.75&$3 \times 12$&3.551&0.164\\
    \hline
  \end{tabular}
  \caption{Comparison of protograph thresholds for BIAWGN channel.}
  \label{tab:optprot_AWGN}
\end{table}
Note that the optimized protographs presented in this section satisfy the conditions of Theorem
\ref{thm:double-expon-fall} and have block-error threshold same as the
bit-error threshold. Also, the block-error rate falls inverse-polynomially (or better) in blocklength under the large-girth construction as described in
Sections \ref{sec:large-girth-lifted} and
\ref{sec:large-girth-prot}. Moreover, even for small sizes of
protographs such as $4\times 8$, $8\times 16$ or $16\times 32$, the
optimization results in thresholds that are near capacity.

\section{Simulation Results}
\label{sec:simulation-result}
We now present simulation results that confirm the predicted threshold
behavior for both the BEC and the BIAWGN channel.  
\subsection{BEC}
Protographs in \ref{subsec:BEC_opt_prot} can be lifted using
$D(m,q)$ graphs ($m$: positive integer, $q$: prime power) from Section \ref{sec:large-girth-prot} since they have a prime number of
edges. The parameters of the constructed LDPC codes are as given in
Table \ref{tab:parameter_BEC}.
 \begin{table}[htb]
  \centering
  \begin{tabular}{|c|c|c|c|c|}
    \hline
    {\bf Code type}&{\bf Rate}&$\mathbf{m,q}$&{\bf Blocklength}\\
    \hline
     $16\times 32$ protograph in \eqref{th=0.495}&0.5&2, 173&957728\\
     \hline
     $4\times 12$ protograph in \eqref{th=0.32}&0.66&2, 61&44652\\
     \hline
     $3\times 12$ protograph in \eqref{th=0.238}&0.75&2, 61&44652\\
     \hline
  \end{tabular}
  \caption{Parameters of protograph LDPC codes used for simulation over BEC.}
  \label{tab:parameter_BEC}
\end{table}

The standard message passing decoder is simulated over the BEC, and the bit and block-error rate
curves are shown in Fig. \ref{fig:bec}. Error rates of AR4JA, DVB-S2
and WIMAX codes of the same rate are shown for comparison. These codes
were lifted to blocklengths comparable to those in Table
\ref{tab:parameter_BEC} of the same rate. The rate-1/2 code was lifted to a
length of 90000, while the rate-2/3 and rate-3/4 codes were lifted to
a length of around 45000.
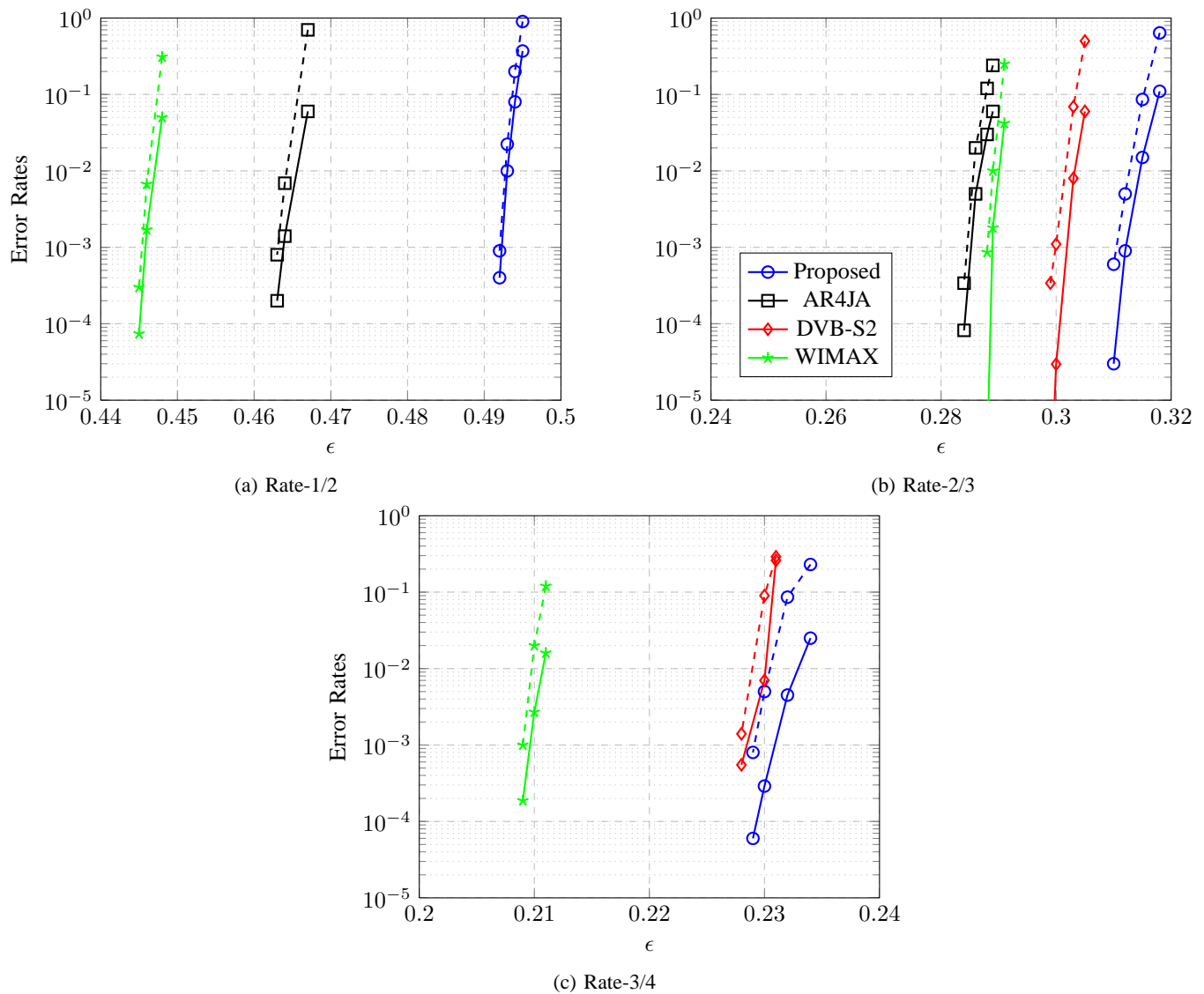
\begin{figure*}
\centering
\begin{subfigure}[b]{0.47\textwidth}
  \centering
   \begin{tikzpicture}
\begin{semilogyaxis}[width=\textwidth,xmin=0.44,xmax=0.5,ymin=1e-5,ymax=1,grid=both,minor grid style=dotted,major grid style=dashed,xlabel=$\epsilon$,ylabel=Error Rates,legend style={at={(axis cs:0.52,2e-5)},anchor=south west}]
\addplot [color=blue,solid,mark=o,mark size=2.5,mark options={solid},thick]
coordinates {
    (0.495,0.37)
    (0.494,0.08)
    (0.493,0.01)
    (0.492,0.0004)
};
\addplot [color=blue,dashed,mark=o,mark size=2.5,mark options={solid},thick]
coordinates {
    (0.495,0.9)
    (0.494,0.2)
    (0.493,0.0223)
    (0.492,0.0009)
};
\addplot [color=black,solid,mark=square,mark size=2.5,mark options={solid},thick]
coordinates {
    (0.467,6e-2)
    (0.464,1.4e-3)
    (0.463,2e-4)
};  
\addplot [color=black,dashed,mark=square,mark size=2.5,mark options={solid},thick]
coordinates {
    (0.467,7e-1)
    (0.464,6.9e-3)
    (0.463,8e-4)
};
\addplot [color=green,solid,mark=star,mark size=2.5,mark options={solid},thick]
coordinates{
    (0.448,5e-2)
    (0.446,1.7e-3)
    (0.445,7.4e-5)
};
\addplot [color=green,dashed,mark=star,mark size=2.5,mark options={solid},thick] 
coordinates{
    (0.448,3.1e-1)
    (0.446,6.7e-3)
    (0.445,3e-4)
};
\end{semilogyaxis}
\end{tikzpicture}
\caption{Rate-1/2}
\label{fig:bec1by2}
\end{subfigure}\hfill
\begin{subfigure}[b]{0.47\textwidth}
  \centering
\begin{tikzpicture}
\begin{semilogyaxis}[width=\textwidth,xmin=0.24,xmax=0.32,ymin=1e-5,ymax=1,grid=both,minor grid style=dotted,major grid style=dashed,xlabel=$\epsilon$,legend style={at={(axis cs:0.245,2e-5)},anchor=south west}]
\addplot [color=blue,solid,mark=o,mark size=2.5,mark options={solid},thick]
coordinates {
    (0.318,0.11)
    (0.315,0.015)
    (0.312,0.0009)
    (0.31,0.00003)
};  
\addplot [color=blue,dashed,mark=o,mark size=2.5,mark options={solid},thick]
coordinates {
    (0.318,0.639)
    (0.315,0.086)
    (0.312,0.005)
    (0.31,0.0006) 
};
\addplot [color=black,solid,mark=square,mark size=2.5,mark options={solid},thick]
coordinates {
    (0.289,6e-2)
    (0.288,3e-2)
    (0.286,5e-3)
    (0.284,8.2e-5)
};  
\addplot [color=black,dashed,mark=square,mark size=2.5,mark options={solid},thick]
coordinates {
    (0.289,2.4e-1)
    (0.288,1.2e-1)
    (0.286,2e-2)
    (0.284,3.4e-4)
};
\addplot [color=red,solid,mark=diamond,mark size=2.5,mark options={solid},thick]
coordinates{
    (0.305,6e-2)
    (0.303,8e-3)
    (0.3,2.96e-5)
    (.299,4.66e-7)
};
\addplot [color=red,dashed,mark=diamond,mark size=2.5,mark options={solid},thick] 
coordinates{
    (0.305,5e-1)
    (0.303,6.9e-2)
    (0.3,1.1e-3)
    (.299,3.4e-4)
};
\addplot [color=green,solid,mark=star,mark size=2.5,mark options={solid},thick]
coordinates{
    (0.291,4.2e-2)
    (0.289,1.8e-3)
    (0.288,1.66e-6)
};
\addplot [color=green,dashed,mark=star,mark size=2.5,mark options={solid},thick] 
coordinates{
    (0.291,2.5e-1)
    (0.289,1e-2)
    (0.288,8.6e-4)
};
\legend{Proposed,,AR4JA,,DVB-S2,,WIMAX}
\end{semilogyaxis}
\end{tikzpicture}
\caption{Rate-2/3}
\label{fig:bec2by3}
\end{subfigure}\hfill
\begin{subfigure}[b]{0.47\textwidth}
  \centering
\begin{tikzpicture}
\begin{semilogyaxis}[width=\textwidth,xmin=0.2,xmax=0.24,ymin=1e-5,ymax=1,grid=both,minor grid style=dotted,major grid style=dashed,xlabel=$\epsilon$,ylabel=Error Rates,legend style={at={(axis cs:0.24,2e-5)},anchor=south west}]
\addplot [color=blue,solid,mark=o,mark size=2.5,mark options={solid},thick]
coordinates{
    (0.234,0.025)
    (0.232,0.0045)
    (0.23,0.00029)
    (0.229,0.00006)
};
\addplot [color=blue,dashed,mark=o,mark size=2.5,mark options={solid},thick] 
coordinates{
    (0.234,0.23)
    (0.232,0.086)
    (0.23,0.005)
    (0.229,0.0008)
};
\addplot [color=red,solid,mark=diamond,mark size=2.5,mark options={solid},thick]
coordinates{
    (0.231,2.6e-1)
    (0.23,7e-3)
    (0.228,5.5e-4)
};
\addplot [color=red,dashed,mark=diamond,mark size=2.5,mark options={solid},thick] 
coordinates{
    (0.231,2.9e-1)
    (0.23,9e-2)
    (0.228,1.4e-3)
};
\addplot [color=green,solid,mark=star,mark size=2.5,mark options={solid},thick]
coordinates{
    (0.211,1.6e-2)
    (0.21,2.7e-3)
    (0.209,1.87e-4)
};
\addplot [color=green,dashed,mark=star,mark size=2.5,mark options={solid},thick] 
coordinates{
    (0.211,1.2e-1)
    (0.21,2e-2)
    (0.209,1e-3)
};
\end{semilogyaxis}
\end{tikzpicture}
\caption{Rate-3/4}
\label{fig:bec3by4}
\end{subfigure}
  \caption{Error rates over a BEC. Solid: bit-error
    rate, Dashed: Block-error rate.}
  \label{fig:bec}
\end{figure*}
As seen from the figure, the optimized protographs perform better than
other comparable codes.
\subsection{BIAWGN channel}
The protographs in in Table \ref{tab:optprot_AWGN} are lifted using $D(m,q)$ graphs to
obtained protograph LDPC codes with parameters given in Table \ref{tab:parameter_AWGN}.
\begin{table}[htb]
  \centering
  \begin{tabular}{|c|c|c|c|c|}
    \hline
    {\bf Code type}&{\bf Rate}&$\mathbf{m,q}$&{\bf Blocklength}\\
    \hline
     $16\times 32$ protograph in \eqref{th=0.3}&0.5&2, 173&957728\\
     \hline
     $4\times 12$ protograph in \eqref{th=1.18}&0.66&2, 67&53868\\
     \hline
     $3\times 12$ protograph in \eqref{th=1.79}&0.75&2, 71&60492\\
     \hline
  \end{tabular}
  \caption{Parameters of protograph LDPC codes used for simulation over BIAWGN channel.}
  \label{tab:parameter_AWGN}
\end{table}
The standard message-passing decoder is simulated over the BIAWGN
channel The bit and block-error rates are compared with codes such as
AR4JA and those from DVB-S2 and WIMAX standard in Fig. \ref{fig:awgn}.
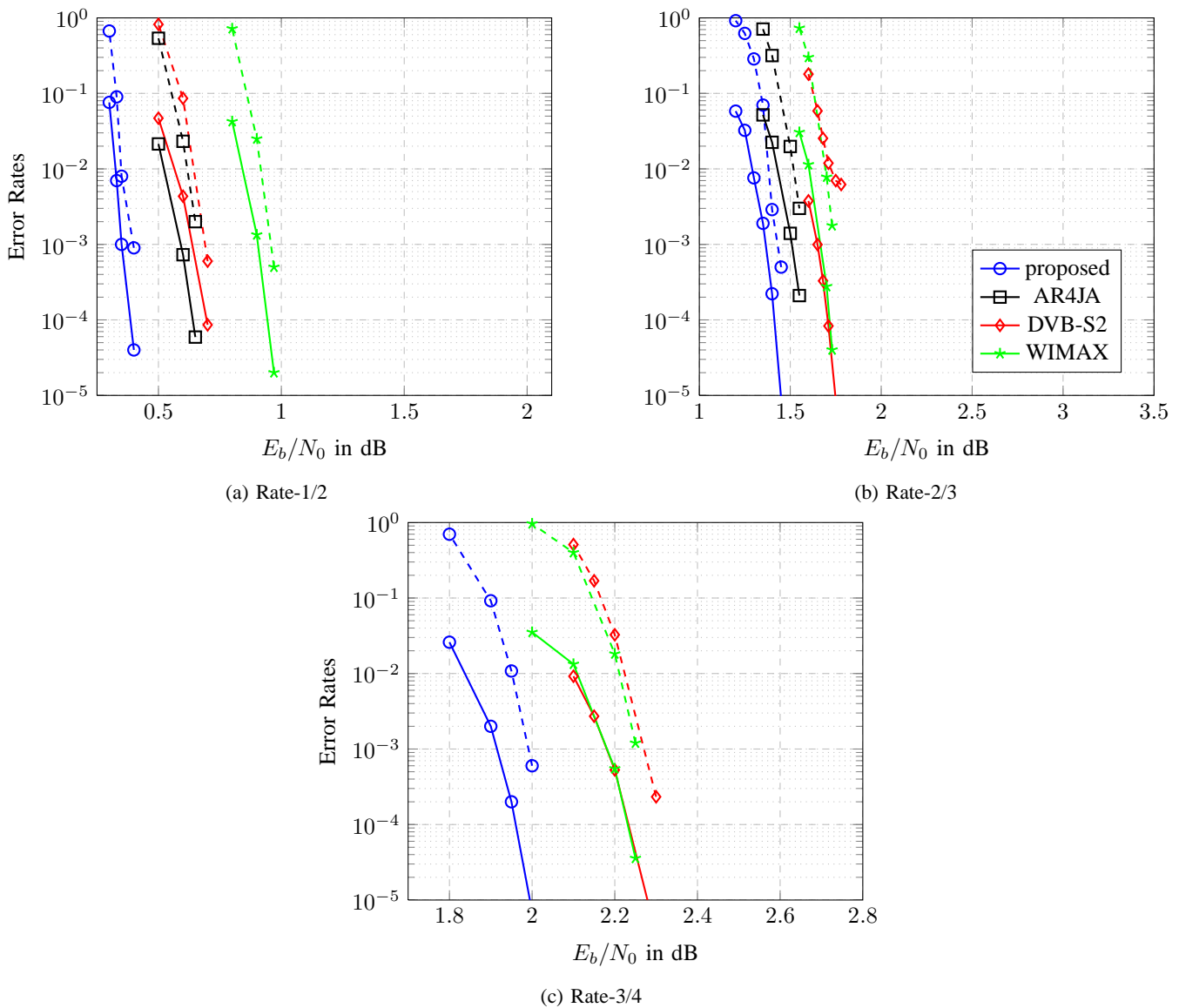
\begin{figure*}
  \centering
  \begin{subfigure}[b]{0.47\textwidth}
   \centering
  \begin{tikzpicture}
\begin{semilogyaxis}[width=\textwidth,xmin=0.25,xmax=2.1,ymin=1e-5,ymax=1,grid=both,minor grid style=dotted,major grid style=dashed,xlabel=$E_b/N_0$ in dB,ylabel=Error Rates,legend style={at={(axis cs:0.6,2e-5)},anchor=south west}]
\addplot [color=blue,solid,mark=o,mark size=2.5,mark options={solid},thick] 
coordinates {
    (0.40,4e-5)
    (0.35,1e-3)
    (0.33,7e-3)
    (0.3,7.6e-2)
};
\addplot [color=blue,dashed,mark=o,mark size=2.5,mark options={solid},thick] 
coordinates {
    (0.40,9e-4)
    (0.35,8e-3)
    (0.33,9e-2)
    (0.3,6.7e-1)
};
\addplot [color=black,solid,mark=square,mark size=2.5,mark options={solid},thick]
coordinates {
    (0.5,2.14e-2)
    (0.6,7.3e-4)
    (0.65,5.92e-5)
};  
\addplot [color=black,dashed,mark=square,mark size=2.5,mark options={solid},thick]
coordinates {
    (0.5,5.37e-1)
    (0.6,2.32e-2)
    (0.65,2e-3)
};
\addplot [color=red,solid,mark=diamond,mark size=2.5,mark options={solid},thick]
coordinates{
    (0.7,8.6e-5)
    (0.6,4.34e-3)
    (0.5,4.69e-2)
};
\addplot [color=red,dashed,mark=diamond,mark size=2.5,mark options={solid},thick]
coordinates{
    (0.7,6e-4)
    (0.6,8.6e-2)
    (0.5,8.21e-1)
};
\addplot [color=green,solid,mark=star,mark size=2.5,mark options={solid},thick]
coordinates{
    (0.97,2e-5)
    (0.9,1.34e-3)
    (0.8,4.24e-2)
};
\addplot [color=green,dashed,mark=star,mark size=2.5,mark options={solid},thick]
coordinates{
    (0.97,5e-4)
    (0.9,2.5e-2)
    (0.8,7.2e-1)
};
\end{semilogyaxis}
\end{tikzpicture}
\caption{Rate-1/2}
\label{rate0.5_AWGN}
\end{subfigure}\hfill
\begin{subfigure}[b]{0.47\textwidth}
   \centering
  \begin{tikzpicture}
\begin{semilogyaxis}[width=\textwidth,xmin=1,xmax=3.5,ymin=1e-5,ymax=1,grid=both,minor grid style=dotted,major grid style=dashed,xlabel=$E_b/N_0$ in dB,legend style={at={(axis cs:2.5,2e-5)},anchor=south west}]
\addplot [color=blue,solid,mark=o,mark size=2.5,mark options={solid},thick] 
coordinates {
    (1.45,9e-6)
    (1.4,2.22e-4)
    (1.35,1.9e-3)
    (1.30,7.6e-3)
    (1.25,3.24e-2)
    (1.2,5.82e-2)
};  
\addplot [color=blue,dashed,mark=o,mark size=2.5,mark options={solid},thick]
coordinates {
    (1.45,5e-4)
    (1.4,2.89e-3)
    (1.35,7e-2)
    (1.30,2.86e-1)
    (1.25,6.23e-1)
    (1.2,9.18e-1)
};
\addplot [color=black,solid,mark=square,mark size=2.5,mark options={solid},thick]
coordinates {
    (1.35,5.19e-2)
    (1.4,2.25e-2)
    (1.5,1.40e-3)
    (1.55,2.1e-4)
};  
\addplot [color=black,dashed,mark=square,mark size=2.5,mark options={solid},thick]
coordinates {
    (1.35,7.11e-1)
    (1.4,3.17e-1)
    (1.5,1.98e-2)
    (1.55,3e-3)
};
\addplot [color=red,solid,mark=diamond,mark size=2.5,mark options={solid},thick] 
coordinates {
    (1.78,7.11e-7)
    (1.75,8.76e-6)
    (1.71,8.31e-5)
    (1.68,3.29e-4)
    (1.65,9.94e-4)
    (1.6,3.77e-3)
};  
\addplot [color=red,dashed,mark=diamond,mark size=2.5,mark options={solid},thick]
coordinates {
    (1.78,6.21e-3)
    (1.75,7e-3)
    (1.71,1.19e-2)
    (1.68,2.54e-2)
    (1.65,5.86e-2)
    (1.6,1.8e-1)
};
\addplot [color=green,solid,mark=star,mark size=2.5,mark options={solid},thick]
coordinates{
    (1.73,4.01e-5)
    (1.7,2.77e-4)
    (1.6,1.14e-2)
    (1.55,3.05e-2)
};
\addplot [color=green,dashed,mark=star,mark size=2.5,mark options={solid},thick]
coordinates{
    (1.73,1.77e-3)
    (1.7,7.8e-3)
    (1.6,3e-1)
    (1.55,7.32e-1)
};

\legend{proposed,,AR4JA,,DVB-S2,,WIMAX,}
\end{semilogyaxis}
\end{tikzpicture}
\caption{Rate-2/3}
\label{rate0.6_AWGN}
\end{subfigure}\hfill
\begin{subfigure}[b]{0.47\textwidth}
 \centering
 \begin{tikzpicture}
\begin{semilogyaxis}[width=\textwidth,xmin=1.7,xmax=2.8,ymin=1e-5,ymax=1,grid=both,minor grid style=dotted,major grid style=dashed,xlabel=$E_b/N_0$ in dB,ylabel=Error Rates,legend style={at={(axis cs:2.32,2e-5)},anchor=south west}]
\addplot [color=blue,solid,mark=o,mark size=2.5,mark options={solid},thick]
coordinates{
    (2.0,7e-6)
    (1.95,2e-4)
    (1.9,2e-3)
    (1.8,2.6e-2)
};
\addplot [color=blue,dashed,mark=o,mark size=2.5,mark options={solid},thick]
coordinates{
    (2,6e-4)
    (1.95,1.08e-2)
    (1.9,9.2e-2)
    (1.8,7e-1)
};
\addplot [color=red,solid,mark=diamond,mark size=2.5,mark options={solid},thick]
coordinates{
    (2.3,3.5e-6)
    (2.2,5.28e-4)
    (2.15,2.72e-3)
    (2.1,9.2e-3)
};
\addplot [color=red,dashed,mark=diamond,mark size=2.5,mark options={solid},thick]
coordinates{
    (2.3,2.33e-4)
    (2.2,3.26e-2)
    (2.15,1.69e-1)
    (2.1,5.1e-1)
};
\addplot [color=green,solid,mark=star,mark size=2.5,mark options={solid},thick]
coordinates{
    (2.25,3.57e-5)
    (2.2,5.49e-4)
    (2.1,1.33e-2)
    (2,3.52e-2)
};
\addplot [color=green,dashed,mark=star,mark size=2.5,mark options={solid},thick]
coordinates{
    (2.25,1.2e-3)
    (2.2,1.82e-2)
    (2.1,4e-1)
    (2,9.62e-1)
};
\end{semilogyaxis}
\end{tikzpicture}
\caption{Rate-3/4}
\label{fig:awgn_0.75}
\end{subfigure}
  \caption{Error rates over a BIAWGN channel.  Solid: bit-error
    rate, Dashed: Block-error rate.}
  \label{fig:awgn}
\end{figure*}
As seen from the figure, the optimized protographs perform better than
other comparable codes. Further, we note that the waterfall region for both
block-error and bit-error rate curves are the same for both the BEC and BIAWGN channels.
\section{Conclusion}
\label{sec:conclusion}
In this work, we studied protograph density evolution and derived
conditions under which the bit and block-error thresholds
coincide. Using large-girth graphs, we presented a deterministic
construction for a sequence of LDPC codes with block-error rate
falling faster than any inverse polynomial in blocklength. We described methods
to optimize protographs and presented small-sized protographs with
thresholds close to capacity.

As part of future work, characterizing the gap to capacity of
finite-length protographs theoretically appears to be an interesting problem for
study, particularly because the thresholds can be close to capacity.  

\bibliographystyle{IEEEtran}
\bibliography{references}

\end{document}